\numberwithin{equation}{section}
\newtheorem{definition}{Definition}[section]
\newtheorem{lemma}[definition]{Lemma}
\newtheorem{theorem}[definition]{Theorem}
\newtheorem{proposition}[definition]{Proposition}
\newtheorem{remarkth}[definition]{Remark}
\newenvironment{remark}{\begin{remarkth}\upshape}{\hfill$\diamond$\end{remarkth}}
\renewcommand{\emph}[1]{{\bfseries\itshape{#1}}}
\newcommand{\R}{\mathbb{R}}      
\newcommand{\Z}{\mathbb{Z}}      
\newcommand{\ltilde}[3][0]{\altura=0 \advance\altura by #1
           \ancho=#2 \anchom=\ancho \divide\anchom by 2
           \anchoa=\ancho \divide\anchoa by 4
           \anchob=\anchom \advance\anchob by \anchoa
           \kern-3pt \begin{array}[b]{c}
           \begin{picture}(1,1)(\anchom,-\altura)
        \qbezier(0,2)(\anchoa,5)(\anchom,2)
        \qbezier(\anchom,2)(\anchob,-1)(\ancho,4)
        \qbezier(0,2)(\anchoa,4.5)(\anchom,1.8)
        \qbezier(\anchom,1.8)(\anchob,-1.5)(\ancho,4)
       \end{picture} \\[-4pt]{#3}
                       \end{array} \kern-4pt    }
\newcommand{\lhat}[3][0]{\altura=0 \advance\altura by #1
           \ancho=#2 \anchom=\ancho \divide\anchom by 2
           \anchoa=\ancho \divide\anchoa by 4
           \anchob=\anchom \advance\anchob by \anchoa
           \kern-3pt \begin{array}[b]{c}
           \begin{picture}(1,1)(\anchom,-\altura)
        \qbezier(0,2)(\anchoa,4)(\anchom,6)
        \qbezier(\anchom,6)(\anchob,4)(\ancho,2)
        \qbezier(0,2)(\anchoa,3.8)(\anchom,5.6)
        \qbezier(\anchom,5.6)(\anchob,3.8)(\ancho,2)
       \end{picture} \\[-4pt] {#3}
                       \end{array} \kern-4pt    }
\newcommand{\I}{I\mkern-7muI}
\newcommand\prol{\@ifstar{\@proldf}{\@prolpf}}  
\def\@prolpf{\@ifnextchar[{\@prolpf@wrt}{\@prolpf@}}
\def\@prolpf@wrt[#1]#2{\@ifnextchar[{\@prolpf@wrt@at{#1}{#2}}{\@prolpf@wrt@{#1}{#2}}}
\def\@prolpf@wrt@at#1#2[#3]{\prolsymbol^{#1}_{#3}#2}
\def\@prolpf@wrt@#1#2{\prolsymbol^{#1}#2}
\def\@prolpf@#1{\@ifnextchar[{\@prolpf@at{#1}}{\@prolpf@@{#1}}}
\def\@prolpf@at#1[#2]{\prolsymbol_{#2}#1}
\def\@prolpf@@#1{\prolsymbol#1}
\def\@proldf{\@ifnextchar[{\@proldf@wrt}{\@proldf@}}
\def\@proldf@wrt[#1]#2{\@ifnextchar[{\@proldf@wrt@at{#1}{#2}}{\@proldf@wrt@{#1}{#2}}}
\def\@proldf@wrt@at#1#2[#3]{\prolsymbol^{*#1}_{#3}#2}
\def\@proldf@wrt@#1#2{\prolsymbol^{*#1}#2}
\def\@proldf@#1{\@ifnextchar[{\@proldf@at{#1}}{\@proldf@@{#1}}}
\def\@proldf@at#1[#2]{\prolsymbol^*_{#2}#1}
\def\@proldf@@#1{\prolsymbol^*#1}
\def\prolsymbol{\mathcal{T}}
\newcommand{\Ht}{\mathbb{H}^2}
\newcommand{\htt}{\mathcal{H}^2}
\newcommand{\h}{\mathfrak{h}}
\newcommand{\g}{\mathfrak{g}}
\newcommand{\slt}{\mathfrak{sl}(2,\R)}
\newcommand{\sech}{\mbox{sech}}
\begin{document}


\title[Relative equilibria for the two-body problem in the hyperbolic space of dimension 2]{Classification and stability of relative equilibria for the two-body problem in the hyperbolic space of dimension 2}

\author[L. C. \ Garc\'{\i}a-Naranjo]{Luis C. \ Garc\'{\i}a-Naranjo}
\address{L. C.\ Garc\'{\i}a-Naranjo:
Departamento de Matem\'aticas y Mec\'anica, IIMAS-UNAM \\
Apdo Postal 20-726 \\
Mexico City 01000, Mexico}
\email{luis@mym.iimas.unam.mx}

\author[J.\ C.\ Marrero]{Juan C.\ Marrero}
\address{Juan C.\ Marrero:
ULL-CSIC Geometr\'{\i}a Diferencial y Mec\'anica Geom\'etrica\\
Departamento de Matem\'aticas, Estad{\'\i}stica e IO, Secci\'on de
Ma\-te\-m\'a\-ti\-cas, Universidad de La Laguna, La Laguna,
Tenerife, Canary Islands, Spain} \email{jcmarrer@ull.edu.es}

\author[E. P\'erez-Chavela]{Ernesto P\'erez-Chavela}
\address{E. P\'erez-Chavela: Departamento de Matem\'aticas, Instituto Tecnol\'ogico Aut\'onomo de M\'exico (ITAM)\\
R\'io Hondo 1, Col. Pogreso Tizap\'an \\
 Mexico City 01080, Mexico}
\email{ernesto.perez@itam.mx}

\author[M. Rodr\'iguez-Olmos]{Miguel Rodr\'iguez-Olmos}
\address{M. Rodr\'iguez-Olmos: Departament de Matem\`atica\\
Universitat Polit\`ecnica de Catalunya\\
C. Jordi Girona, 31. 08034,
Barcelona, Spain}
\email{miguel.rodriguez.olmos@upc.edu}


\keywords{relative equilibria, non-linear stability, two-body problem, hyperbolic space, symmetry Lie group}

\subjclass[2010]{37C40,37J60,70F25,70G45,70G65}

\begin{abstract}
We classify and analyze the stability of all relative equilibria for the two-body problem in the hyperbolic space of dimension 2 and
we formulate our results in terms of the intrinsic Riemannian data of the problem.
\end{abstract}

\vspace{1cm}

\maketitle

\section{Introduction}\label{sec:intro}

This paper analyzes the relative equilibria of the 
generalization of the two-body problem to the complete, simply connected, hyperbolic, two-dimensional space $\htt$.
This is a continuation of the research program initiated in Diacu, P\'erez-Chavela, Santoprete \cite{Diac}, on
the study of relative equilibria for the $n$-body problem in spaces of constant curvature.
Here we focus on the simplest case of two particles in $\htt$. 
We present a rigorous
geometric treatment that allows us to fully classify and determine the stability properties of all relative equilibria.

This problem has a long history beginning with the formulation of the Kepler problem in spaces of constant curvature. 
Apparently (see  \cite{Florin1}), the analytic expression of the 
generalization of Newton's potential for the Kepler problem to the
three-dimensional hyperbolic  space $\mathcal{H}^3$ was first given 
by Schering \cite{Schering}, following the geometric ideas suggested decades earlier in the
works of Bolyai \cite{bolyai} and Lobachevsky \cite{loba}. The dynamics of this problem is integrable
and was thoroughly considered by Liebmann \cite{Liebmann}, and more recently by Kozlov et al \cite{kozlov} and Cari\~nena et al \cite{CaRaSa}.
Just like in the euclidean case, the bounded orbits are (hyperbolic) conics and a version of Kepler's  laws
holds. A nice explanation for this analogy in terms of projective geometrical arguments can be found in \cite{Albouy2}. 
We also mention that the ideas behind the generalization of the Kepler problem to constant curvature spaces have been taken
further into the subriemannian realm \cite{Mont-Shanbrom}. 
%

The formulation of the $n$-body problem in spaces of constant curvature follows by generalizing the Kepler potential
to include the pairwise interaction between all the masses \cite{kozlov}. In Section 3 of reference \cite{kozlov}, the authors 
indicate the interest in the investigation of its particular
solutions. An explicit expression of the equations
of motion can be found in \cite{Diac}. Physical and mathematical motivations to study this problem, as well as
some  historical details,  can be found  in \cite{Florin1}.

We remark that apart from the Kepler and the $n$-body problem, other mechanical systems with configuration manifold a space of constant curvature of dimension two have been discussed very recently, using differential geometric tools. This is the case, for instance, of  the two centre problem, the harmonic oscillator, and the $n$ vortex problem. See \cite{BM1, CaRaSa1,CaRaSa2,Montaldi-Nava,Voz} and the references therein.

Contrary to the euclidean case, the two-body problem in $\htt$
does not reduce to the Kepler problem and is non-integrable \cite{Sch2}. 
A number of publications
have addressed the dynamics of the  restricted two-body problem in $\htt$ (see \cite{BM2,Ki} and the
references therein). 
Just like the full two-body problem, it is nonintegrable
\cite{Mac-Prz}.

 The impossibility to reduce the two-body problem in $\htt$ to the Kepler problem
 has been associated to the absence of  ``the integral of the
center of mass" in  \cite{Florin1,FNDiacu}. From the geometric mechanics perspective, this seems to be related to the algebraic differences
between the group of orientation preserving isometries of euclidean two-space and of $\htt$ (see Remark \ref{R:Geom-Mech}). For a treatment of the  symmetry reduction of the problem see \cite{BM3,Sch1} (see also \cite{ScSt} and the references therein).

Previous results on the existence of relative equilibria for the two-body problem in $\htt$ first appeared
in \cite{Diac} in the case of equal masses and in \cite{Diacu2} in the general case. The results in  \cite{Diac}
are obtained in the Weierstrass model for $\htt$ that arises as an embedding of a hyperboloid in Minkowski's 3-dimensional space. The authors work with global coordinates in the ambient Minkowski space and thus refer to them
as {\em extrinsic coordinates}. On the other hand, the treatment in  \cite{Diacu2} is performed in 
the Poincar\'e disk and upper half-plane models. The authors use the terminology  {\em intrinsic coordinates}
and  they say that they take an  {\em intrinsic approach} since these spaces are not embedded in a larger ambient space. In 
this paper we will reserve the use of the term {\em intrinsic} to concepts that are intrinsic in the Riemannian geometry
sense, namely, that are invariant under isometries. In particular, contrary to the conventions taken in \cite{Diacu2}, in the terminology of the present paper, a choice of coordinates in a model of $\htt$, or  a mathematical formula expressed in these coordinates, are not intrinsic.

Our contribution is to classify all  relative equilibria of the problem and to fully describe their stability.
Moreover,  we formulate all of our results in an intrinsic form. In this way, all of our results and formulas 
(given in section  \ref{S:Intrinsic}) are valid in any model of $\htt$.

%

We show that the only relative equilibria of the problem arise as a conjugation
of the so-called elliptic and hyperbolic relative equilibria found in \cite{Diacu2}.  
 In order to study the problem of existence
for general configurations, we introduce a notion of {\em hyperbolic center of mass} between any two masses in $\htt$. Our definition seems to be
the correct one from a dynamical point of view since it allows us to simplify the study of existence and stability of relative equilibria
by assuming that the masses lie in a particularly simple configuration.
Interestingly,  a different definition of hyperbolic centroid of mass had been given before in the literature  \cite{Ga}.

We remark that in the recent paper \cite{OrRe} the authors consider the existence problem of relative equilibria in the $n$-body problem (in particular, in the case $n = 2$) in the upper half plane of constant negative curvature. In this model, they also prove that, up to conjugation, the relative equilibria are elliptic and hyperbolic. However, they do not consider general configurations so their description is not exhaustive. In fact, in our paper, we prove that the study of existence may be reduced to the case of two masses in canonical configuration (see Proposition  \ref{P:Canonic-Conf}). This simplification is possible by the introduction of the concept of the hyperbolic center of mass and the exploitation of the homogeneity and isotropy of $\htt$.


Concerning the stability, we  show that all hyperbolic relative equilibria are unstable. A similar
result for a hyperbolic relative equilibrium of  three identical masses is given in \cite{Diac-Perez}.
However,  we show that the stability of  elliptic relative equilibria depends on the distance between the masses.
If the masses are sufficiently close, the relative equilibrium is stable. Otherwise the relative equilibrium is unstable. We give an explicit analytic expression that shows how the threshold
distance depends on the ratio between the masses and we also show that this distance is related to a critical value of the momentum. 

Our stability analysis is based on the Reduced-Energy-Momentum method of Simo et al \cite{Simo}, which gives sufficient conditions for a relative equilibrium to be $G_\mu$-nonlinearly stable in the sense of \cite{Patrick}. Several comments on this notion of stability are in order: First, this is a weaker notion than that of orbital stability, since we regard as $G_\mu$-stable relative equilibria for which nearby initial conditions can drift far away along $G_\mu$-orbits, the group $G_\mu$ being the subset of the symmetry group fixing the total momentum of the relative equilibrium. It follows from the geometric and dynamic nature of Hamiltonian systems with symmetry that to require orbital stability for a Hamiltonian relative equilibrium would be too restrictive and as a consequence,  $G_\mu$-stability is the common accepted notion of nonlinear stability in this field. An explicit example where a numerical simulation shows a relative equilibrium which is $G_\mu$-stable but not orbitally stable due to this drift can be found in \cite{LeoMa}.
Second, if a Hamiltonian relative equilibrium is $G_\mu$-stable then the projected equilibrium point to the symplectic reduced space corresponding to its momentum value  is Lyapounov stable in the usual sense. However the converse is not necessary true since lifting the conditions for Lyapounov stability on the reduced space to the original phase space would only take into account initial conditions with prescribed momentum. The precise relationship between these two notions of stability is clarified in \cite{Patrick}. It is worth noting that $G_\mu$-stability of a relative equilibrium also implies Lyapounov stability of the corresponding projected point in the reduced Poisson space. However, in this case the relationship between these two concepts of stability is more involved.  
 In our case,  two conclusions can be drawn about the solutions whose initial conditions are sufficiently close to a $G_\mu$-stable 
elliptic relative equilibrium. Firstly, they are bounded, and secondly, the variations in the distance between the particles is arbitrarily small during the
motion.

Our paper is organized as follows. In section \ref{prelim} we review known results on the hyperbolic space $\htt$ 
and of the Poincar\'e upper half-plane model $\Ht$. In section \ref{S:Geom-Formulation} we formulate the 2-body problem in $\htt$ and we give a useful   explicit formula for the potential
in the  $\Ht$-model.
 Section \ref{S:InfinitSymm} describes the symmetries of the system, its infinitesimal versions, and the corresponding conservation laws. In section \ref{S:Existence}
 we introduce the concept of hyperbolic center of mass  and  we prove all of our existence results.
 Section \ref{S:Stability} considers the stability of the relative equilibria found in section \ref{S:Existence}. Many of the results given in sections \ref{S:Geom-Formulation} --\ref{S:Stability} are 
formulated in the  $\Ht$-model and section \ref{S:Intrinsic} is concerned
with providing their intrinsic form and, as a consequence, the independence of these results on the chosen model. Finally, we include an Appendix which contains some standard results on $G_\mu$-stable relative equilibria for general symmetric Hamiltonian systems and 
for the $2$-body problem in an arbitrary Riemannian manifold.

%
%
%

%
%
%
%
%
%
%
%

\section{Preliminaries}\label{prelim}

By the {\em  hyperbolic two-dimensional space} we understand the two-dimensional complete simply connected
Riemannian manifold with constant curvature $-1$. This is a well defined abstract space
that we denote by $\htt$. A well known
model for it is
the upper half plane $\Ht=\{ (x,y) \in \R^2 \, / \; \, y>0 \}$ equipped with the Riemannian metric
\begin{equation}
\label{E:HypMetric}
ds^2=\frac{dx^2 +dy^2}{y^2}.
\end{equation}

In this work, we will mainly work with this model but we will formulate our results in such way that they
are valid in the abstract hyperbolic two-dimensional space $\htt$. Our choice to work in $\Ht$ stems from three nice properties.
The first one is that it can be conveniently covered with a single coordinate chart $(x,y)$. The second one is that this model is conformal,
in the sense that the angle between any two curves in the upper half plane that intersect is the same when measured with the
 $\Ht$ metric and with the euclidean metric. This property suggests the identification $T_{(x,y)}\Ht\cong \R^2$ that
 we will use in the sequel.
 
The third convenient property  is that $\Ht$ can be endowed
 with a Lie group structure arising from the following matrix representation:
\begin{equation*}
(x,y) \mapsto \left ( \begin{array}{cc} y & x \\ 0 & 1 \end{array} \right ).
\end{equation*}
This Lie group can be interpreted as the semi-direct product of $(\R^+, \cdot)$ and $(\R,+)$.
That is, the semi-direct product of expansions and translations on $\R$. The Lie algebra $\h$ is
spanned by the matrices
\begin{equation*}
e_1:=\left ( \begin{array}{cc}  0 & 1 \\ 0 & 0  \end{array} \right ), \qquad e_2:=\left ( \begin{array}{cc}  1 & 0 \\ 0 & 0  \end{array} \right ),
\end{equation*}
with commutation relation
\begin{equation*}
[e_1,e_2]=-e_1.
\end{equation*}

A crucial observation is that the hyperbolic metric \eqref{E:HypMetric} is invariant under the left multiplication on $\Ht$.
Therefore $\Ht$ is a subgroup of the group of former isometries of $\Ht$. This implies that the isometry group of $\Ht$ acts transitively which means that {\em the hyperbolic two-dimensional space $\htt$ is homogenous}.

%
%
%

The following, well-known result, follows from
 Hadamard theorem (see, for instance, Chapter 10, Theorem 22 in \cite{On}).

\begin{proposition}
\label{P:PropHypSpace1}
Given any two points in
$\htt$ there exists a unique geodesic passing through them.
\end{proposition}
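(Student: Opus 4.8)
The plan is to deduce both existence and uniqueness from the Cartan--Hadamard theorem, exactly as the reference to Hadamard's theorem suggests. First I would invoke that result: since $\htt$ is complete, simply connected and has constant curvature $-1$, and in particular nonpositive sectional curvature everywhere, for every point $p\in\htt$ the exponential map $\exp_p\colon T_p\htt\to\htt$ is a diffeomorphism. This single fact carries the entire argument, so no curvature computation beyond observing $-1\le 0$ is actually needed, and completeness enters only through the hypotheses of the theorem.

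For existence, fix two points $p,q\in\htt$. Because $\exp_p$ is a bijection, there is a well-defined vector $w:=\exp_p^{-1}(q)\in T_p\htt$, and the curve $\gamma(t):=\exp_p(tw)$, $t\in[0,1]$, is by definition a geodesic with $\gamma(0)=p$ and $\gamma(1)=q$. Hence a geodesic through the two given points always exists.

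For uniqueness I would argue at the level of images (unparametrized geodesics), which is the relevant reading of \emph{the} geodesic passing through $p$ and $q$. Let $\sigma$ be any geodesic whose image contains both points. After an affine reparametrization we may assume $\sigma(0)=p$, so that $\sigma(t)=\exp_p(tv)$ with $v=\dot\sigma(0)$, and there is a parameter value $t_0$ with $\exp_p(t_0 v)=q$. Injectivity of $\exp_p$ forces $t_0 v=w$, so $v$ is a nonzero scalar multiple of $w$; consequently the image of $\sigma$ equals $\set{\exp_p(sw)}{s\in\R}$, independently of the chosen $\sigma$. This shows the unparametrized geodesic through $p$ and $q$ is unique.

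The one point requiring care — and where I expect a subtlety rather than a genuine obstacle to lie — is the distinction between a geodesic as a parametrized curve and as a point set: uniqueness can only hold for the latter, since any geodesic may be reparametrized and traversed in either direction. Once this is handled by the normalization $\sigma(0)=p$ above, everything reduces to the injectivity of $\exp_p$, and the proof is complete.
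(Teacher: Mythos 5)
Your argument is correct and is exactly the route the paper intends: the paper offers no proof beyond citing the Cartan--Hadamard theorem, and your deduction of existence and uniqueness from the fact that $\exp_p$ is a diffeomorphism (including the careful reading of uniqueness at the level of unparametrized geodesics) is the standard way to fill in that citation.
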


The following proposition is also a well-known result (see, for instance, Chapter 3, Example 3.10 in \cite{Ca})

\begin{proposition}
\label{P:PropHypSpace2}
 The geodesic curves on $\Ht$ are either half circumferences
cutting at right angles the $x$-axis $y=0$, or vertical lines $x=const$.
\end{proposition}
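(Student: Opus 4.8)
The plan is to integrate the geodesic equations of the metric \eqref{E:HypMetric} directly, taking advantage of its invariance under translations in the $x$-variable. I would start from the energy Lagrangian $L=\tfrac{1}{2}(\dot x^2+\dot y^2)/y^2$, whose critical curves parametrized proportionally to arc length are precisely the geodesics. Since $L$ does not depend on $x$, the Euler--Lagrange equation in $x$ produces a first integral $c:=\partial L/\partial\dot x=\dot x/y^2$ that is constant along each geodesic, and the constancy of the speed lets me normalise so that $(\dot x^2+\dot y^2)/y^2=1$. These two relations replace the second-order geodesic system by a manageable first-order problem.

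Next I would split into two cases according to the value of $c$. If $c=0$ then $\dot x\equiv 0$, so $x$ is constant and the geodesic is the vertical line $x=\mathrm{const}$; this gives the second family in the statement. If $c\neq 0$ then $\dot x=c\,y^2$, and inserting this in the unit-speed relation gives $\dot y^2=y^2(1-c^2y^2)$. Dividing the two equations, $dy/dx=\pm\sqrt{1-c^2y^2}/(cy)$, which is separable; the substitution $u=1-c^2y^2$ integrates it to $(x-x_0)^2+y^2=1/c^2$ for an integration constant $x_0$. Keeping $y>0$, this is the upper half of the circle of radius $1/|c|$ centred at the boundary point $(x_0,0)$, that is, the first family.

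It then remains to check the orthogonality and that no geodesics are missed. At either point where such a circle meets the line $y=0$ the radius is horizontal, so the tangent is vertical and the semicircle cuts the $x$-axis at a right angle, as claimed. Exhaustiveness is immediate because the computation above produced every solution of the geodesic system; alternatively, since there is a unique geodesic through each point in each direction (in the spirit of Proposition \ref{P:PropHypSpace1}) and the two families already realise all point--direction pairs, nothing is left out. I do not expect a genuine obstacle here: the only points requiring care are the case distinction $c=0$ versus $c\neq 0$, the two sign choices of the square root (which correspond to the two halves of each semicircle), and verifying that the integration constant places the centre on the boundary $y=0$. As a cleaner but less self-contained alternative, one can argue by symmetry: vertical lines are the fixed-point sets of the reflection isometries $(x,y)\mapsto(2a-x,y)$ and hence geodesics, and the orientation-preserving M\"obius isometries of $\Ht$ carry these lines onto the semicircular family.
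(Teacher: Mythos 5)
Your proof is correct. Note that the paper does not actually prove this proposition: it is quoted as a well-known fact with a citation to do Carmo (Chapter 3, Example 3.10 in \cite{Ca}), so your argument is a self-contained substitute rather than a parallel of anything in the text. The computation checks out: translation invariance of the metric \eqref{E:HypMetric} in $x$ gives the first integral $c=\dot x/y^2$, the unit-speed normalisation gives $\dot y^2=y^2(1-c^2y^2)$, and separating variables in $dy/dx=\pm\sqrt{1-c^2y^2}/(cy)$ does yield $(x-x_0)^2+y^2=1/c^2$, an upper semicircle centred on the boundary, while $c=0$ yields the vertical lines; orthogonality to $y=0$ follows since the tangent at a boundary point of such a circle is vertical. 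Your exhaustiveness argument is the right one to make explicit, and the cleaner of your two versions is the second: every solution of the geodesic system has some value of $c$, but the separation-of-variables step implicitly divides by $\dot y$ at the apex of the semicircle, so it is tidier to invoke uniqueness of the geodesic through a given point with a given initial direction (the standard ODE fact underlying Proposition \ref{P:PropHypSpace1}) and observe that the two families already realise every point--direction pair in $\Ht$. Your closing alternative --- vertical lines as fixed-point sets of the reflection isometries, transported to the semicircles by M\"obius maps --- is in fact closer in spirit to how the paper treats geodesics elsewhere (cf.\ Proposition \ref{P:Geod-unitcircle}), and either route is acceptable.
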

%
%
%
%
Consider a parametrized geodesic $\ell:\R\to \Ht$ and let $v_1, v_2$ be tangent vectors to $\Ht$ at $\ell(t_1)$, $\ell(t_2)$, for some
$t_1, t_2\in \R$. Suppose moreover, that   $v_i$ is perpendicular to $\dot \ell (t_i)$, $i=1,2$. We say that the vectors $v_1$ and $v_2$ are {\em equally oriented}
if the bases $\{v_1, \dot \ell(t_1)\}$ and $\{v_2, \dot \ell(t_2)\}$ define the same orientation of $\R^2$. A pair of  vectors that is not
 equally oriented is said to be  {\em oppositely oriented}.

Therefore, if $\ell\subset \Ht$ is a geodesic in $\Ht$,  
one may introduce the notions of equally and oppositely oriented 
non-zero tangent vectors $v_1$ and $v_2$ based at different points in $\ell$ and that are 
perpendicular to $\ell$ (see Figures \ref{F:eq-op-orient} and \ref{F:eq-op-orient-vert}   for an illustration). 
The correctness of the
definition relies on the possibility to parametrize geodesics, and on a choice of orientation of $\Ht$. Hence, 
 this concept may be transferred, in a natural way, to the abstract hyperbolic space $\htt$.
This notion will be useful in the intrinsic description of relative equilibria in section \ref{S:Intrinsic}.

\begin{figure}[h]
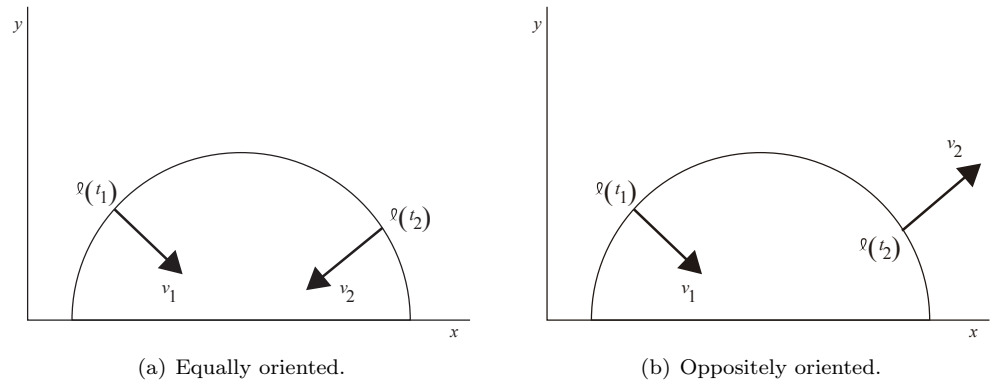


\centering
\subfigure[Equally oriented.]{\includegraphics[width=6cm]{Eq-Oriented.pdf}} \qquad 
\subfigure[Oppositely oriented.]{\includegraphics[width=6cm]{Op-Oriented.pdf}}
\caption{Illustration of equally and oppositely oriented vectors for a half circle geodesic in $\Ht$.} \label{F:eq-op-orient}
\end{figure}

\begin{figure}[h]
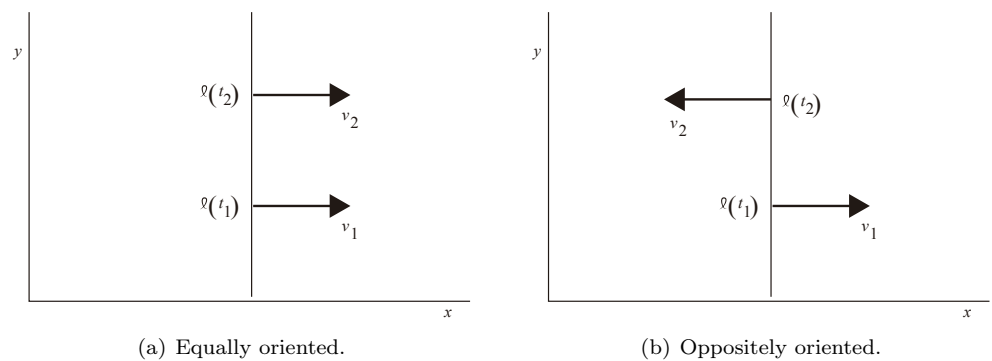


\centering
\subfigure[Equally oriented.]{\includegraphics[width=6cm]{Eq-Oriented-Vert.pdf}} \qquad 
\subfigure[Oppositely oriented.]{\includegraphics[width=6cm]{Op-Oriented-Vert.pdf}}
\caption{ Illustration of equally and oppositely oriented vectors for a vertical geodesic in $\Ht$.} \label{F:eq-op-orient-vert}
\end{figure}

For future reference, we note that the Riemannian distance,
$d$, between two points $(x_1,y_1), \, (x_2, y_2)\in \Ht$ satisfies
\begin{equation}
\label{E:distanceFmla}
\cosh(d)=1+\frac{(x_1-x_2)^2+(y_1-y_2)^2}{2y_1y_2}.
\end{equation}
In particular, if $(x_1, y_1) = (\pm\cos\theta, \sin \theta)$, with $0 < \theta < \frac{\pi}{2}$, and $(x_2, y_2) = (0, 1)$ then
\begin{equation}\label{trigon-hyper}
\sin \theta = \sech (d), \; \; \; \cos \theta = \tanh (d).
\end{equation}

\subsection{The group of isometries of $\Ht$}
\label{S:Moebius}

Let
\[ {\rm SL}(2,\mathbb{R}) = \{ g \in {\rm GL}(2,\mathbb{R}) \, | \, \det g=1 \},  \]
where ${\rm GL}(2,\mathbb{R})$ is the group of invertible $2 \times 2$ real matrices,
be the {\it special linear real group}, which is a 3-dimensional simply connected,  real Lie group.

The group $ {\rm SL}(2,\mathbb{R}) $ defines an action $\Psi$ on $\Ht$ as follows.
Given a matrix $g \in {\rm SL}(2,\mathbb{R})$,
\begin{equation*}
g = \left ( \begin{array}{cc} a & b \\ c & d \end{array} \right ),
\end{equation*}
and an element $(x,y)\in \Ht$,  we define $\Psi(g,(x,y))$ via  the  M\"obius transformation
\begin{equation}
\label{E:DefAction}
\Psi(g,(x+iy))= \frac{a(x+iy) + b}{c(x+iy) +d},
\end{equation}
where we have used complex notation for elements in $\R^2$.
One can check that  the upper half plane $y>0$ is invariant under the above transformation and that
the action axioms are verified, so $\Psi$ is well defined.

It is well known  that  for any $g\in{\rm SL}(2,\mathbb{R}) $,  the map $\Psi(g,\cdot)$ is an orientation preserving Riemannian isometry of
$\Ht$ equipped with the hyperbolic metric. Moreover, the group of orientation preserving, proper isometries of $\mathbb{H}^2$ is the quotient
group $\displaystyle {\rm SL}(2,\mathbb{R})/\{\pm I \}:={\rm PSL}(2,\mathbb{R})$. The quotient is taken to account for the  fact that $\Psi(g,\cdot)=\Psi(-g,\cdot)$. In  the sequel we will sometimes work with the group ${\rm SL}(2,\mathbb{R})$ keeping in mind that
isometries are counted twice.


As mentioned before,  a good number of isometries defined by elements in ${\rm SL}(2,\R)$ correspond to
left translation on $\Ht$. Indeed, left multiplication by the element
\begin{equation*}
\left ( \begin{array}{cc} y_0 & x_0 \\ 0 & 1 \end{array} \right ) \in \Ht
\end{equation*}
corresponds to the action of the matrix
\begin{equation*}
\left ( \begin{array}{cc} \sqrt{y_0} & \frac{x_0}{\sqrt{y_0}} \\ 0 & \frac{1}{\sqrt{y_0}} \end{array} \right ) \in
{\rm SL}(2,\R).
\end{equation*}
In fact, we can consider $\Ht$ as a Lie subgroup of ${\rm SL}(2,\R)$ via the group
homomorphism
\begin{equation}
\label{E:H2subgroupofSL2}
\left ( \begin{array}{cc} y_0 & x_0 \\ 0 & 1 \end{array} \right )  \mapsto \left ( \begin{array}{cc} \sqrt{y_0} & \frac{x_0}{\sqrt{y_0}} \\ 0 & \frac{1}{\sqrt{y_0}} \end{array} \right ).
\end{equation}

The action $\Psi$ of  ${\rm SL(2,\R)}$ on $\Ht$ introduces a new type of symmetries that do not arise as translations on $\Ht$.
These transformations are analogous to usual rotations on the plane. This family of symmetries is generated
by an ${\rm S}^1$ subgroup of ${\rm PSL}(2,\R)$ consisting of matrices
of the form
\begin{equation*}
g_t:= \left ( \begin{array}{cc} \cos \left ( \frac{t}{2} \right ) & -\sin \left ( \frac{t}{2} \right )\\ \sin \left ( \frac{t}{2} \right ) &
 \cos \left ( \frac{t}{2} \right ) \end{array} \right ), \qquad t\in \R \quad \mbox{mod} \, 2\pi.
\end{equation*}
The corresponding transformation on $\Ht$ acts like a rotation by the angle $t$ around $(0,1)$ in the following way:
it fixes the identity element $(x=0, y=1)$  and takes the geodesic through $(0,1)$ with tangent vector $v\in \R^2\cong T_{(0,1)}\Ht$ into the
the geodesic through $(0,1)$ with tangent vector $g_{-2t}v\in \R^2\cong T_{(0,1)}\Ht$ .
Thus, this subgroup acts transitively in the space of directions through $(0,1)$. This, together with the homogeneity of space,
shows that {\em the hyperbolic two-dimensional space $\htt$ is isotropic}.

 We shall call the above
transformations ``hyperbolic rotations" around $(0,1)$.

It is easy to prove that the hyperbolic rotations around $(0,1)$ are just the isotropy subgroup ${\rm SL(2,\R)}_{(0,1)}$ of $(0,1)$ with respect to the ${\rm SL(2,\R)}$-action on $\Ht$. Thus, using Theorem A.1.1 in \cite{Ku}, the compactness of ${\rm SL(2,\R)}_{(0,1)}$ and the fact that the action of ${\rm SL(2,\R)}$ on $\Ht$ is transitive, we deduce an important result: the action of ${\rm SL}(2, \R)$ on $\Ht$ is proper.  

\subsection{A property of isometries in $\Ht$}

In this section, we present a basic result about isometries of $\Ht$ that will be useful in the sequel.

\begin{proposition}
\label{P:Geod-unitcircle}
Let $\ell \subset \Ht$ be a complete geodesic curve and let $(x_0,y_0)\in \ell$. There exists an isometry of $\Ht$ that maps
$\ell$ onto the upper half of the unit circle and $(x_0,y_0)$ onto $(0,1)$.
\end{proposition}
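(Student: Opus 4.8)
The plan is to construct the required isometry as a composition of two maps coming from the $\mathrm{SL}(2,\R)$-action $\Psi$: a left translation that sends $(x_0,y_0)$ to the basepoint $(0,1)$, followed by a hyperbolic rotation around $(0,1)$ that swings the image geodesic onto the unit circle. Throughout I would use the elementary fact that a Riemannian isometry maps (complete) geodesics to (complete) geodesics, together with the standard fact from the theory of the geodesic equation that a geodesic is uniquely determined by one of its points and its tangent direction there. In essence, the statement is just a concrete manifestation of the homogeneity and isotropy of $\htt$ established in Section~\ref{S:Moebius}.

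First I would move the marked point to the identity element $(0,1)$. Realizing $\Ht$ as a subgroup of $\mathrm{SL}(2,\R)$ through \eqref{E:H2subgroupofSL2}, left multiplication by the group element $(x_0,y_0)$ corresponds to the M\"obius isometry $\Psi(g_0,\cdot)$, where $g_0\in\mathrm{SL}(2,\R)$ is the matrix on the right-hand side of \eqref{E:H2subgroupofSL2}; a direct check shows $\Psi(g_0,(0,1))=(x_0,y_0)$. Hence $\phi_1:=\Psi(g_0^{-1},\cdot)$ is an isometry with $\phi_1(x_0,y_0)=(0,1)$. Being an isometry, $\phi_1$ carries the complete geodesic $\ell$ to a complete geodesic $\ell'$, and since $(x_0,y_0)\in\ell$ we have $(0,1)\in\ell'$.

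Second I would rotate $\ell'$ onto the unit circle while keeping $(0,1)$ fixed. By Proposition~\ref{P:PropHypSpace2}, the upper half of the unit circle is the unique geodesic through $(0,1)$ whose tangent at $(0,1)$ is horizontal: a geodesic through $(0,1)$ is either the vertical line $x=0$, whose tangent is vertical, or a half-circle centered at some point $(c,0)$, whose tangent at $(0,1)$ is perpendicular to the radius and hence horizontal precisely when $c=0$, which forces radius $1$. Now the hyperbolic rotations $g_t$ around $(0,1)$ fix $(0,1)$ and act transitively on the directions through $(0,1)$, as recalled in Section~\ref{S:Moebius}; I would choose $t$ so that the associated rotation sends the tangent direction of $\ell'$ at $(0,1)$ to the horizontal direction. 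Since this rotation is an isometry fixing $(0,1)$ and mapping the initial direction of $\ell'$ to that of the unit circle, uniqueness of geodesics with prescribed initial data shows it carries $\ell'$ onto the upper half of the unit circle. The composition of this rotation with $\phi_1$ is then the desired isometry, as it sends $(x_0,y_0)$ to $(0,1)$ and $\ell$ to the unit upper half circle.

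The one point requiring care is orientation: the horizontal direction occurs in two opposite senses, but as an unoriented set the unit upper half circle corresponds to either of them, and the transitivity of the rotation subgroup over all directions (parametrized by $t$ mod $2\pi$) guarantees that a suitable $t$ exists in every case. I do not anticipate any serious obstacle; the proof is a direct application of the transitivity of the $\mathrm{SL}(2,\R)$-action together with the isotropy provided by the hyperbolic rotations around $(0,1)$.
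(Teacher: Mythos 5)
Your proposal is correct and follows essentially the same two-step argument as the paper: first use homogeneity (a left translation) to send $(x_0,y_0)$ to $(0,1)$, then use isotropy (a hyperbolic rotation around $(0,1)$) to carry the translated geodesic onto the upper unit half circle. You simply make explicit the details the paper leaves implicit, namely the formula for the translation and the identification of the unit half circle as the unique geodesic through $(0,1)$ with horizontal tangent.
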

\begin{proof}
By homogeneity of $\Ht$ there exists an isometry defined by $g_1\in {\rm SL}(2,\R)$ that maps $(x_0,y_0)$ onto $(0,1)$. Since isometries
take geodesics into geodesics, the image of   $\ell$ by this isometry is a certain geodesic $ \tilde \ell$ through  $(0,1)$.
Next, by isotropy of $\Ht$  there exists a hyperbolic rotation around $(0,1)$ defined by $g_2\in {\rm SL}(2,\R)$ that maps $ \tilde \ell$
onto the upper half of the unit circle. The element $g_2g_1\in  {\rm SL}(2,\R)$ defines the desired isometry.
\end{proof}
An explicit formula can be given for the isometry described in the statement of the above proposition. Let $v\in T_{(x_0,y_0)}\Ht$ be any non-zero vector tangent to $\ell$. It can be written as
\begin{equation*}
v=\lambda y_0 (\cos \theta_0 , \sin \theta_0) \in \R^2,
\end{equation*}
for a certain $\theta_0\in [0,2\pi)$, where $\lambda>0$ is the hyperbolic norm of $v$. The isometry defined by
\begin{equation*}
g_0=\left ( \begin{array}{cc} a & b \\ c & d \end{array} \right )\in {\rm SL}(2,\R),
\end{equation*}
where
\begin{equation*}
\begin{split}
a=\frac{1}{\sqrt{y_0}}\cos \left ( \frac{\theta_0}{2} \right ), \quad b=-\frac{\cos  \left ( \frac{\theta_0}{2} \right ) x_0 +
\sin  \left ( \frac{\theta_0}{2} \right ) y_0}{\sqrt{y_0}}, \\ c=\frac{1}{\sqrt{y_0}}\sin \left ( \frac{\theta_0}{2} \right ),
\quad d=\frac{-\sin  \left ( \frac{\theta_0}{2} \right ) x_0 +
\cos  \left ( \frac{\theta_0}{2} \right ) y_0}{\sqrt{y_0}},
\end{split}
\end{equation*}
satisfies the requirements of Proposition \ref{P:Geod-unitcircle}.

Note that $g_0 = g_2 g_1$, with $g_1, g_2 \in {\rm SL}(2, \R)$ given by
\[
g_1 = \left ( \begin{array}{cc} \frac{1}{\sqrt{y_0}} & -\frac{x_0}{\sqrt{y_0}} \\ 0 & \sqrt{y_0} \end{array} \right ),
\; \; \;
g_2 = \left ( \begin{array}{cc} \cos \left ( \frac{\theta_0}{2} \right ) & -\sin \left ( \frac{\theta_0}{2} \right )\\ \sin \left ( \frac{\theta_0}{2} \right ) &
 \cos \left ( \frac{\theta_0}{2} \right ) \end{array} \right ).
 \]

\section{Geometric formulation of the 2--body problem in  $\htt$}
\label{S:Geom-Formulation}

The  $2$--body problem in $\htt$ can be defined as the simple mechanical system on the configuration space
$Q=(\htt)^2\setminus\Delta$, where $\Delta$ is the set of collision configurations, and whose Lagrangian
$L:TQ\to \R$ is given by
\begin{equation*}
L=\frac{1}{2}\left ( m_1v^2_1+m_2v_2^2 \right ) -  V(d).
\end{equation*}

In the above equation $m_i$ and  $v_i^2$ are, respectively, the mass and the square of the hyperbolic norm
of the velocity of the $i^{th}$ particle. The positive number $d$ is the hyperbolic distance between the first and
the second mass and the potential $V:\R^+\to \R$ is given by
\begin{equation}
\label{potential}
V(d)=-km_1m_2\coth(d),
\end{equation}
where $k$ is a positive constant. The hyperbolic cotangent dependence of $V$ on the distance  is well-accepted to be the ``correct"
 generalization of the Newtonian gravitational potential for two reasons. The first one is that  the Kepler problem with this potential  satisfies Bertrand's property, namely, all of the bounded orbits are closed.
The second reason is that, just like the   Newtonian
potential $V_N(d)=\frac{1}{d}$ is (proportional to) the fundamental solution of the Laplacian operator on $\R^3$,
the proposed potential $V(d)$ is (proportional to) the fundamental solution of the Laplace-Beltrami operator on
 $\mathcal{H}^3$. As a consequence, it satisfies the hyperbolic version of Gauss law.


In what follows we will work in the $\Ht$ model. We will also denote $Q=(\Ht)^2\setminus(\Delta)$ with elements $q=(x_1,y_1,x_2,y_2)$.
 The Hamiltonian
of the problem is $h:T^*Q\to \R$ given by
\begin{equation*}
h(p_q)=\frac{1}{2}||p_q||^2+V(q)
\end{equation*}
where $||\cdot ||$ denotes the norm in the fibers of $T^*Q$ induced by the kinetic energy Riemannian metric on $Q$ given by
\begin{equation}
\label{E:Metric-on-Q}
m_1\frac{dx_1^2+dy_1^2}{y_1^2}+m_2\frac{dx_2^2+dy_2^2}{y_2^2}.
\end{equation}
%
%
%
%

Using \eqref{E:distanceFmla}
\begin{equation*}
\label{E:Hyperb-distance}
d=\cosh^{-1}\left ( 1 + \frac{(x_1-x_2)^2+(y_1-y_2)^2}{2y_1y_2} \right )
\end{equation*}
and the relationship
\begin{equation*}
\coth(\cosh^{-1}(z))=\frac{z}{\sqrt{z^2-1}}
\end{equation*}
that is valid for all $z>1$ we can write
\begin{equation}
\label{E:Potential}
V(x_1,y_1,x_2,y_2)=-km_1m_2\frac{(x_1-x_2)^2+y_1^2+y_2^2}{\sqrt{((x_1-x_2)^2+(y_1-y_2)^2)((x_1-x_2)^2+(y_1+y_2)^2)}}.
\end{equation}
%
%

%
%
\section{Infinitesimal symmetries and the momentum map}
\label{S:InfinitSymm}

Since the potential energy only depends on the hyperbolic distance between the
particles, it follows that the diagonal
action of $\mbox{SL}(2,\R)$  on $Q$ lifts to a symmetry of the problem. 
It is well known that  nontrivial isometries of $\htt$ that fix more than one point do not preserve orientation. Therefore,
since we have removed the collision
configurations, this action is free. 

In addition, using the proper character of the action of ${\rm SL(2, \R)}$ on $\Ht$, we directly deduce that the action of ${\rm SL(2,\R)}$ on $Q$ also is proper.

\begin{remark}{Since the action of ${\rm SL(2, \R)}$ on $Q$ is free and proper, we have that $Q$ is the total space of a principal ${\rm SL(2, \R)}$-bundle over the space of orbits $Q/{\rm SL(2, \R)}$. In fact, using  Proposition \ref{P:Geod-unitcircle} and formula \eqref{trigon-hyper}, one may prove that $Q/{\rm SL(2, \R)}$ is diffeomorphic to $\R^+=\{r\in \R :  r>0 \}$. Moreover, under this diffeomorphism, 
the orbit projection $\pi:Q\to \R^+$ is just the restriction to $Q$ of the hyperbolic Riemannian distance.}
\end{remark}
In this section we write the infinitesimal version of the ${\rm SL(2, \R)}$-symmetry in the $\Ht$ model, we make a classification
of the Lie algebra elements that will be useful for the sequel and we compute the
corresponding momentum map.

The  Lie algebra of both ${\rm PSL}(2,\mathbb{R})$ and ${\rm SL}(2,\mathbb{R})$ is the 3-dimensional real linear space
\[ \slt = \{ \xi \in {\rm M}_{2\times 2}(\mathbb{R}) \, | \, \,  {\rm trace\  \! \xi}=0  \}.  \]

Hence, any non-zero
element in $\slt$ is of one of the following three types:

\begin{enumerate}
\item Elliptic. These elements have two complex conjugate, purely imaginary eigenvalues.
\item Hyperbolic. They posses two real eigenvalues with the same absolute value and \\ opposite signs.
\item Parabolic. They have a multiplicity two zero eigenvalue  (and are not diagonalizable).
\end{enumerate}

The following are representatives of elements of the above types and form a basis of $\slt$.
\begin{equation*}
\xi_e:=\left ( \begin{array}{cc} 0 & -\frac{1}{2} \\ \frac{1}{2} & 0 \end{array} \right ), \qquad
\xi_h:=\left ( \begin{array}{cc} \frac{1}{2} &0 \\ 0 & -\frac{1}{2} \end{array} \right ), \qquad \xi_p:=\left ( \begin{array}{cc} 0 & 1 \\ 0 & 0 \end{array} \right ).
\end{equation*}
We have that
\begin{equation}\label{structure-constants}
[\xi_e, \xi_h] = \xi_e + \xi_p, \; \; [\xi_e, \xi_p] = -\xi_h, \; \; \; [\xi_h, \xi_p] = \xi_p.
\end{equation}
 Let $\xi$ be an arbitrary elliptic element in $\slt$ with eigenvalues $\pm i\frac{\omega}{2}$, $\omega\in \R$,
 and corresponding
complex eigenvectors $u\pm iv$ with $u,v\in \R^2$. We then have
\begin{equation}
\label{E:AuxElliptic}
\xi u=-\frac{\omega}{2} v, \qquad \xi v=\frac{\omega}{2} u.
\end{equation}
From these relations  it is easy to show that $u$ and $v$ are linearly independent over
$\R^2$. Suppose that $\{u,v\}$ is positively oriented. Then one can scale $u$ and $v$ by the same positive factor
so that \eqref{E:AuxElliptic} continues to hold and the matrix $g=:(u|v)$, that has
$u$ and $v$ as columns, has determinant
one so it belongs to $ {\rm SL}(2, \R)$. In view of \eqref{E:AuxElliptic}
we obtain
\begin{equation*}
{\rm Ad}_{g^{-1}}\xi =g^{-1}\xi g=-\omega \xi_e.
\end{equation*}
If $\{u,v\}$ is not positively oriented, after a rescaling by a positive factor, the matrix $g=(u|-v)\in {\rm SL}(2, \R)$ and
satisfies ${\rm Ad}_{g^{-1}}\xi =\omega \xi_e$.
In a similar way, given a parabolic element  $\xi \in \slt$, one can find a basis $\{u,v\}$ of $\R^2$ such that
\begin{equation*}
\xi u=0, \qquad \xi v=u.
\end{equation*}
If $\{u,v\}$ is positively oriented then we can scale $u$ and $v$  by the same
 positive factor so that the determinant of the matrix $g=(u|v)$ is $1$, and the above relations still hold.
 In this case
 \[
 \mbox{ Ad}_{g^{-1}}\xi =\xi_p.
\]
 If $\{u,v\}$ is not positively oriented one can rescale  by a positive factor so that the matrix $g=(u|-v)\in {\rm SL}(2,\R)$
 and we have $ \mbox{ Ad}_{g^{-1}}\xi =-\xi_p$.
 
Finally,  any   hyperbolic element in $\xi \in \slt$ with non-zero real eigenvalues $\pm \frac{\omega}{2}$ is 
diagonalizable. Rescaling the eigenvectors (by positive or negative factors) shows that one may
choose $g \in {\rm SL}(2, \R)$ such that
 \[
\xi = \omega g \xi_h g^{-1} = \omega {\rm Ad}_{g}\xi_e.
\]
In conclusion, we have proved the following.
\begin{proposition}\label{ellip-hyper-para}
Let $\xi$ be an element in $\slt$. Then:
\begin{enumerate}
\item
$\xi$ is elliptic (respectively, hyperbolic) if and only if there exists $g \in {\rm SL}(2, \R)$ and $\omega \in \R$, $\omega \neq 0$ such that
\[
\xi = \omega {\rm Ad}_g \xi_e
\]
(respectively, $\xi = \omega  {\rm Ad}_g \xi_h$).
\item
$\xi$ is parabolic if and only if there exists $g \in {\rm SL}(2, \R)$ such that
\[
\xi = \pm  {\rm Ad}_g \xi_p.
\]
\end{enumerate}
\end{proposition}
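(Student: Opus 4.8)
The plan is to exploit the fact that for a traceless $2\times 2$ matrix the characteristic polynomial is $\lambda^2 + \det\xi$, so the eigenvalues are $\pm\sqrt{-\det\xi}$ and the type of $\xi$ is governed entirely by the sign of $\det\xi$: it is elliptic when $\det\xi>0$, hyperbolic when $\det\xi<0$, and parabolic when $\det\xi=0$ (and $\xi\neq 0$). Since conjugation preserves eigenvalues and scaling by $\omega$ preserves tracelessness, the ``if'' directions are immediate: $\omega\,{\rm Ad}_g\xi_e$ has eigenvalues $\pm i\omega/2$, which are purely imaginary precisely when $\omega\neq 0$, so it is elliptic; likewise $\omega\,{\rm Ad}_g\xi_h$ has real eigenvalues $\pm\omega/2$ and is hyperbolic, while $\pm{\rm Ad}_g\xi_p$ has a double zero eigenvalue and is parabolic.

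For the ``only if'' directions I would construct the conjugating element from (generalized) eigenvectors. In the elliptic case, writing the complex eigenvectors as $u\pm iv$ for the eigenvalue $i\omega/2$ yields the real relations \eqref{E:AuxElliptic}, from which $u,v$ are seen to be linearly independent. Forming $g=(u|v)$ with columns $u$ and $v$ gives $\xi g = g\,(-\omega\xi_e)$, hence ${\rm Ad}_{g^{-1}}\xi=-\omega\xi_e$. Because the relations are homogeneous in $(u,v)$, a common positive rescaling lets me normalize $\det g$ to $1$ whenever $\{u,v\}$ is positively oriented; if it is negatively oriented I instead use $g=(u|-v)$, which has positive determinant and yields ${\rm Ad}_{g^{-1}}\xi=+\omega\xi_e$. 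Either way $\xi=\omega'\,{\rm Ad}_g\xi_e$ for a nonzero $\omega'$, so the sign is absorbed into the free parameter. The hyperbolic case is the cleanest: $\xi$ is diagonalizable over $\R$ with eigenvectors $u,v$ for $\pm\omega/2$, and $g=(u|v)$ gives ${\rm Ad}_{g^{-1}}\xi=\omega\xi_h$; here one eigenvector may be freely negated (it remains an eigenvector), so the determinant can always be normalized to $1$ without disturbing $\omega$.

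The parabolic case is where the genuine sign ambiguity survives, and this is the step I expect to require the most care. Choosing $u$ spanning $\ker\xi$ and $v$ with $\xi v=u$ (a Jordan basis), the matrix $g=(u|v)$ satisfies ${\rm Ad}_{g^{-1}}\xi=\xi_p$. The obstruction is that the normalization $\xi v=u$ forces $u$ and $v$ to be rescaled by the \textit{same} factor, whose square is positive; hence a common rescaling can fix $|\det g|=1$ but never the sign of $\det g$. When $\{u,v\}$ is negatively oriented one must pass to $g=(u|-v)$, and a direct check gives ${\rm Ad}_{g^{-1}}\xi=-\xi_p$, producing the $\pm$ in the statement. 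The main obstacle throughout is precisely this orientation bookkeeping: verifying, in each of the three normal forms, whether the sign defect can be absorbed into the scalar $\omega$ (elliptic), removed by negating an eigenvector (hyperbolic), or must be recorded explicitly (parabolic).
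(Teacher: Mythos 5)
Your proposal is correct and follows essentially the same route as the paper: building the conjugating matrix $g$ from the (generalized) eigenvectors in each of the three cases, normalizing $\det g=1$ by a common positive rescaling, and tracking orientation to decide whether the sign defect is absorbed into $\omega$ (elliptic), removed by negating one eigenvector (hyperbolic), or survives as the $\pm$ (parabolic). The only addition is your explicit determinant-sign criterion for the ``if'' direction, which the paper leaves implicit since conjugation preserves eigenvalues.
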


%

The description of $\slt$ in terms of elliptic, hyperbolic and parabolic elements ties in very well with our study of ${\rm SL}(2,\R)$ done in section \ref{S:Moebius}. By a direct calculation, one can see that elements $\xi_h$ and $\xi_p$ span
a two-dimensional Lie subalgebra of $\slt$. The underlying Lie subgroup of ${\rm SL}(2,\R)$ is isomorphic to $\Ht$ via the map \eqref{E:H2subgroupofSL2}.

Starting from the definition of the action $\Psi$ in \eqref{E:DefAction}, a direct calculation
shows that
\begin{equation}
\label{E:Hyp-Par}
\Psi \left (\exp(\omega \xi_h t), ( x,y)\right ) = e^{\omega t}
( x,y),
\qquad
\Psi \left (\exp(\omega \xi_p t), ( x,y ) \right ) =
\left (  x+\omega t ,y \right ).
\end{equation}

On the other hand, one has
\begin{equation}
\label{E:Ellip}
\Psi \left (\exp(\omega\xi_e t), ( x,y ) \right )= \frac{\left ( \sin (\omega t)(x^2+y^2-1)+2x\cos(\omega t) \, , \, 2y \right )}{(x^2+y^2+1)+(1-x^2-y^2)\cos(\omega t)  +2 x\sin (\omega t) },
\end{equation}
which is a hyperbolic rotation around $(x,y)$ as introduced in Section  \ref{S:Moebius}. The terminology
is somewhat unfortunate since ``elliptic" elements in the Lie algebra $\slt$ generate ``hyperbolic"
rotations.

Putting $\omega=1$, differentiating  and evaluating at $t=0$ formulas \eqref{E:Hyp-Par} and \eqref{E:Ellip}, we obtain
the following expressions for the infinitesimal generators of the action
\begin{equation}
\label{E:Infinit-Generator1}
\begin{split}
(\xi_h)_{\Ht}(x,y)&= ( x, y)\in T_{(x,y)}\Ht, \qquad (\xi_p)_{\Ht}(x,y)=(1 ,0)\in T_{(x,y)}\Ht ,\\
 (\xi_e)_{\Ht}(x,y)&= \left ( \frac{y^2-x^2-1}{2},- xy \right )\in T_{(x,y)}\Ht.
\end{split}
\end{equation}
The corresponding expressions for the infinitesimal generators of the diagonal action of ${\rm SL}(2,\R)$
on $Q$ are readily obtained from the above formulas.

In our treatment we will identify the dual Lie algebra $\slt ^*$ with $\slt$ via the pairing
\begin{equation}
\label{E:Dual-identification}
\langle \mu , \xi \rangle = 2\mbox{Trace}(\mu \xi), \qquad \mbox{for} \qquad \mu\in \slt ^*\cong \slt, \; \xi \in \slt.
\end{equation}
The basis of $\slt^*\cong \slt$ that is dual to $\{ \xi_h, \xi_e, \xi_p\}$ via this pairing is $\{ \mu_h, \mu_e, \mu_p\}$ with
\begin{equation}
\label{E:dualbasis}
\mu_h=\xi_h, \qquad \mu_e=\xi_p,  \qquad \mu_p=(\xi_p+\xi_e).
\end{equation}
In addition, with the
previous identification
\begin{equation}
\label{E:adstar}
{\rm ad}^*_\xi\mu=[\mu,\xi], \qquad {\rm and} \qquad {\rm Ad}^*_{g^{-1}}\mu=g\mu g^{-1}.
\end{equation}

We can now compute the momentum map associated with this symmetry. The momentum map of the cotangent lift of the diagonal action
of ${\rm SL}(2,\R)$ on $Q$ is  $J:T^*Q\to \slt ^*\cong \slt$ given by
 \begin{equation}
 \label{E:J}
\begin{split}
J(x_1,x_2,y_1,y_2,p_{x_1},p_{x_2}, p_{y_1},  p_{y_2})&= \left ( \sum_{i=1}^2x_ip_{x_i}+y_ip_{y_i} \right ) \mu_h
+\left ( \sum_{i=1}^2\frac{p_{x_i}(y_i^2-x_i^2-1)-2p_{y_i}x_iy_i}{2} \right ) \mu_e \\
& \qquad + \left ( \sum_{i=1}^2p_{x_i} \right ) \mu_p.
\end{split}
\end{equation}
%

The components of the above expression can be checked to be integrals of the equations of motion. Since
the symmetries associated to $\xi_h$ and $\xi_p$ 
are related to the homogeneity of space,  the components of $\mu_h$ and $\mu_p$ are natural generalizations of
linear momentum. On the other hand, since the symmetry defined by $\xi_e$ is a consequence of  isotropy of the space, the component of
$\mu_e$  is naturally  analogous to the classical   angular momentum. Our interpretation of the integrals of motion differs from that of \cite{FNDiacu}.

\begin{remark}
\label{R:Geom-Mech}
The system has the same number of conserved quantities as the euclidean two-body problem
arising from the action of the group of orientation preserving isometries. However notice that 
the subgroup representing translations is $\R^2$ in the euclidean case whereas for the hyperbolic
case it is $\Ht$. A fundamental difference between the two is that the latter is not abelian. As a consequence, 
 the symplectic
reduction of $T^*Q$ by $\Ht$ yields a reduced space whose generic dimension is 6 (the stabilizer of the momentum 
$G_\mu$ is generically trivial). This contrasts with the
euclidean case where the symplectic reduction  by the action of $\R^2$ has dimension 4 and the resulting reduced system
is equivalent
to the euclidean Kepler problem in the plane. The two extra dimensions
that are  lowered in the reduction of the euclidean case are interpreted as  
passing to the  center of mass coordinates, a procedure that is not
possible in the hyperbolic case.

Another difference is related to the fact that $\R^2$ is a normal subgroup of the euclidean group ${\rm SE}(2,\R)$
so one can perform reduction by stages \cite{RedStages} and further reduce the system by rotations. The ultimately
reduced system is the Hamiltonian system with one-degree of freedom  for the radius under the influence of the effective potential
of the Kepler problem.
 On the other hand, $\Ht$ is not a normal subgroup of ${\rm SL}(2,\R)$ and, as a consequence, 
the hypotheses for  reduction by stages \cite{RedStages} are not met.

\end{remark}

\section{Existence of relative equilibria}
\label{S:Existence}

In this section we will classify all relative equilibria of the problem. We will show that the only
relative equilibria arise as conjugation of the ones found in \cite{Diacu2}.

We begin with the following.
\begin{definition}
The {\em hyperbolic center of mass of two masses} $m_1$, $m_2$ in $\htt$
is the unique point in $\htt$ that satisfies the following properties:
\begin{enumerate}
\item It lies along the (unique) geodesic connecting $m_1$ and $m_2$.
\item If $d_1$ denotes its distance to $m_1$ and $d_2$ its distance to $m_2$, then
\begin{equation*}
m_1\sinh(2d_1) = m_2 \sinh(2d_2).
\end{equation*}
\end{enumerate}
\end{definition}
Note that in the above definition $d_1+d_2$ is the distance between $m_1$ and $m_2$.
It can be checked that the hyperbolic center of mass of two masses is well and uniquely defined.
Figure \ref{F:Hyp-Center-of-mass} illustrates our definition.
\begin{figure}[ht]
\centering
\includegraphics[width=9cm]{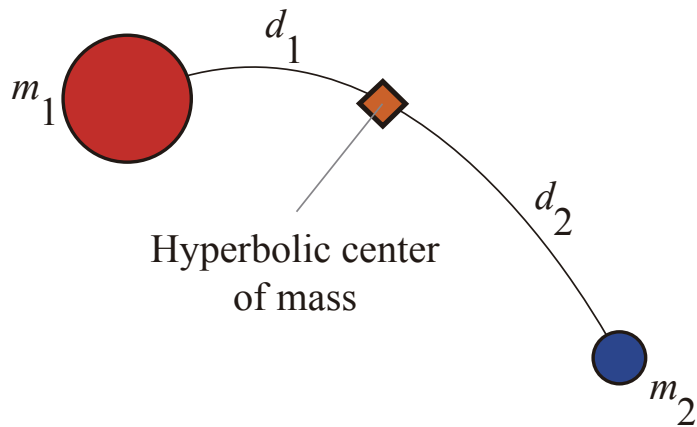}
\caption{\small{Schematic representation of the hyperbolic center of mass between particles $m_1$ and $m_2$.
The curve connecting the particles is a geodesic and the relation $m_1\sinh(2d_1) = m_2 \sinh(2d_2)$ is satisfied. }}\label{F:Hyp-Center-of-mass}
\end{figure}

Suppose that $X_1, X_2\in \htt$ denote the positions of the masses $m_1$ and $m_2$  and that $C\in \htt$ is their hyperbolic 
center of mass. It is clear that the hyperbolic center of mass of the 
masses $m_1$ and $m_2$ at the positions $g\cdot X_1, g\cdot X_2$ is $g\cdot C$ where $g$ denotes any
isometry of $\htt$.
%
%

Our definition of the hyperbolic center of mass is the one appropriate for our purposes of classifying the relative equilibria of the problem.\footnote{We are not claiming
any general  properties of the evolution of the hyperbolic center of mass under the dynamics, see Remark \ref{R:Center-of-mass}.}
 Interestingly, our definition differs from the position of the hyperbolic centroid defined in \cite{Ga}  by generalizing the euclidean ``lever rule". In this treatment, the hyperbolic centroid is a material point with a certain mass. For us, the hyperbolic center of mass is just an abstract point in $\htt$.

\begin{definition}
\label{D:CanonicalConfiguration}
Two masses $m_1, m_2$ in $\Ht$ are said to be in {\em canonical configuration} if they are respectively located at
$(\cos \theta_1, \sin \theta_1)$ and $(-\cos \theta_2, \sin \theta_2)$, where the angles $\theta_1, \theta_2$ satisfy $0<\theta_1, \theta_2<\frac{\pi}{2}$, and the following relation holds
\begin{equation}
\label{E:TwoPHypREcond}
\frac{m_1}{m_2}=\frac{\cos \theta_2 \sin^2\theta_1}{\sin^2 \theta_2 \cos \theta_1}.
\end{equation}
\end{definition}

The definition for a canonical configuration is illustrated in Figure \ref{F:Can-Conf}.
\begin{figure}[ht]
\centering
\includegraphics[width=9cm]{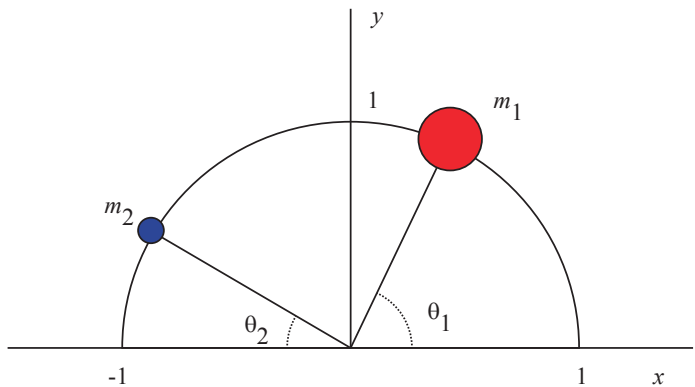}
\caption{\small{Two particles of masses $m_1$ and $m_2$ in canonical configuration.
The  relation \eqref{E:TwoPHypREcond} holds. }}\label{F:Can-Conf}
\end{figure}

Using (\ref{trigon-hyper}), we deduce the following characterization of two masses in canonical configuration.
\begin{lemma}\label{resul-conf-can}
Two masses in $\Ht$ which are located at $(\cos \theta_1, \sin \theta_1)$ and $(-\cos \theta_2, \sin \theta_2)$, with $0 < \theta_1, \theta_2 < \frac{\pi}{2}$, are in canonical configuration if and only if its center of mass is the point $(0, 1)$.
\end{lemma}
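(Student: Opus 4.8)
The plan is to exploit the fact that the candidate point $(0,1)$ already lies on the geodesic joining the two masses, so that being their hyperbolic center of mass reduces to a single scalar identity which I can translate into \eqref{E:TwoPHypREcond} by means of \eqref{trigon-hyper}.

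First I would observe that both masses lie on the upper half of the unit circle $x^2+y^2=1$, which meets the $x$-axis orthogonally and is therefore a geodesic by Proposition \ref{P:PropHypSpace2}; by the uniqueness statement of Proposition \ref{P:PropHypSpace1} it is \emph{the} geodesic connecting $m_1$ and $m_2$. Since $(0,1)$ also lies on this semicircle, and strictly between the two masses (the hypothesis $0<\theta_1,\theta_2<\frac\pi2$ forces $\cos\theta_i>0$, placing $m_1$ in the region $x>0$ and $m_2$ in the region $x<0$), the point $(0,1)$ satisfies condition (i) in the definition of the hyperbolic center of mass, and the two geodesic segments from $(0,1)$ to the masses have lengths $d_1,d_2$ with $d_1+d_2=d(m_1,m_2)$. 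Because the center of mass is the \emph{unique} point satisfying (i) and (ii), it follows that $(0,1)$ is the center of mass if and only if it satisfies condition (ii), namely $m_1\sinh(2d_1)=m_2\sinh(2d_2)$.

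Next I would compute $d_1$ and $d_2$. Applying \eqref{trigon-hyper} to the pair $(\cos\theta_1,\sin\theta_1),(0,1)$ and to the pair $(-\cos\theta_2,\sin\theta_2),(0,1)$ gives $\sech(d_i)=\sin\theta_i$ and $\tanh(d_i)=\cos\theta_i$ for $i=1,2$. Hence $\cosh(d_i)=1/\sin\theta_i$ and $\sinh(d_i)=\cos\theta_i/\sin\theta_i$, so that
\[
\sinh(2d_i)=2\sinh(d_i)\cosh(d_i)=\frac{2\cos\theta_i}{\sin^2\theta_i}.
\]
Substituting these expressions into condition (ii) and cancelling the common factor $2$ turns $m_1\sinh(2d_1)=m_2\sinh(2d_2)$ into $m_1\cos\theta_1/\sin^2\theta_1=m_2\cos\theta_2/\sin^2\theta_2$, which is precisely \eqref{E:TwoPHypREcond} after rearranging to isolate $m_1/m_2$. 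This chain of equivalences establishes both implications simultaneously.

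There is no genuine analytic obstacle here; the only point that needs care is the geometric bookkeeping of the first step — namely confirming that $(0,1)$ automatically lies on the connecting geodesic and strictly between the masses, so that condition (i) comes for free and \eqref{trigon-hyper} may legitimately be applied to each segment separately. Once that identification is in place, the equivalence is a one-line trigonometric computation.
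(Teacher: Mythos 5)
Your proof is correct and follows exactly the route the paper intends: the lemma is stated there as an immediate consequence of \eqref{trigon-hyper}, and your computation $\sinh(2d_i)=2\cos\theta_i/\sin^2\theta_i$ together with the observation that $(0,1)$ lies on the unit-semicircle geodesic between the two masses is precisely the omitted verification. Nothing is missing.
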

Now, we may prove the following result
\begin{proposition}
\label{P:Canonic-Conf}
Given two masses $m_1, m_2$ in $\Ht$ there exists an element of $\rm{SL}(2,\R)$ that maps them to the canonical configuration.
\end{proposition}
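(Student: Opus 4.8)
The plan is to reduce an arbitrary pair of positions to the canonical template by first moving their hyperbolic center of mass to $(0,1)$ along the upper unit semicircle, and then to invoke Lemma \ref{resul-conf-can} so that the defining relation \eqref{E:TwoPHypREcond} comes for free. Concretely, let $X_1, X_2 \in \Ht$ be the positions of $m_1$ and $m_2$, let $\ell$ be the unique complete geodesic through them (Proposition \ref{P:PropHypSpace1}), and let $C \in \ell$ be their hyperbolic center of mass. By Proposition \ref{P:Geod-unitcircle} there is an isometry defined by some $g \in {\rm SL}(2,\R)$ mapping $\ell$ onto the upper half of the unit circle and $C$ onto $(0,1)$. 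Since isometries carry geodesics to geodesics, the images $g\cdot X_1$ and $g\cdot X_2$ lie on this semicircle, so each can be written as $(\pm\cos\theta_i, \sin\theta_i)$ for some $\theta_i \in (0,\pi)$.

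Next I would locate the two images relative to the apex $(0,1)$. By the equivariance of the hyperbolic center of mass noted just after its definition, the center of mass of $g\cdot X_1$ and $g\cdot X_2$ equals $g\cdot C = (0,1)$. As $C$ lies strictly between $X_1$ and $X_2$ on $\ell$ (both distances $d_1, d_2$ are positive) and isometries preserve the order of points along a geodesic, the point $(0,1)$ lies strictly between $g\cdot X_1$ and $g\cdot X_2$ along the semicircle. Hence the two images sit on opposite sides of $(0,1)$: one has positive and the other negative first coordinate, and in the notation above one may take $\theta_1, \theta_2 \in (0,\tfrac{\pi}{2})$.

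There remains only the ordering, namely that $m_1$ should be the mass with positive first coordinate. If this already holds, then $g\cdot X_1 = (\cos\theta_1, \sin\theta_1)$ and $g\cdot X_2 = (-\cos\theta_2, \sin\theta_2)$; if not, I would compose $g$ with the hyperbolic rotation $g_\pi$ of angle $\pi$ about $(0,1)$. A direct computation with the M\"obius transformation \eqref{E:DefAction} gives $g_\pi \cdot (\cos\theta, \sin\theta) = (-\cos\theta, \sin\theta)$, so $g_\pi$ preserves the semicircle, fixes $(0,1)$, and interchanges the two sides. Replacing $g$ by $g_\pi g \in {\rm SL}(2,\R)$ therefore places $m_1$ on the right while keeping the center of mass at $(0,1)$, since $g_\pi$ fixes $(0,1)$. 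In either case the resulting positions are $(\cos\theta_1, \sin\theta_1)$ and $(-\cos\theta_2, \sin\theta_2)$ with $\theta_1, \theta_2 \in (0,\tfrac{\pi}{2})$ and with center of mass $(0,1)$.

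Finally, Lemma \ref{resul-conf-can} asserts that this last property is exactly equivalent to relation \eqref{E:TwoPHypREcond}, so the image configuration is canonical and the proof is complete. The only genuinely delicate points are the side-and-ordering bookkeeping on the semicircle, which the explicit rotation $g_\pi$ disposes of, and the verification of \eqref{E:TwoPHypREcond}, which I expect to be the main obstacle were it not for Lemma \ref{resul-conf-can}: that lemma is precisely what lets me avoid checking \eqref{E:TwoPHypREcond} by hand and instead read it off from the center-of-mass condition together with its equivariance under isometries.
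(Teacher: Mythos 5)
Your proof is correct and follows essentially the same route as the paper's: apply Proposition \ref{P:Geod-unitcircle} to send the geodesic onto the upper unit semicircle with the hyperbolic center of mass at $(0,1)$, swap sides with the hyperbolic rotation by $\pi$ about $(0,1)$ if necessary, and conclude via Lemma \ref{resul-conf-can}. The only difference is that you spell out the bookkeeping (equivariance of the center of mass, the betweenness argument, and the explicit action of the $\pi$-rotation) that the paper's terser proof leaves implicit.
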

\begin{proof}
In view of Proposition \ref{P:Geod-unitcircle} there exists an isometry that maps the geodesic passing through $m_1$ and
$m_2$ onto the upper unit circle with the center of mass mapping onto $(0,1)$. If after this transformation $m_1$ lies in the first
quadrant we are done. Otherwise we apply the isometry defined by the matrix
\begin{equation*}
\left ( \begin{array}{cc} 0 & -1 \\ 1 & 0 \end{array} \right )\in \rm{SL}(2,\R).
\end{equation*}

\end{proof}

Relative equilibria are solutions of the equations of motion that are contained in a group orbit for the action of $\rm{SL}(2,\R)$ on
$T^*Q$. It is well known that a relative equilibrium is equivalent to a pair $(q,\xi)\in Q\times \frak{g}$, and in this case the solution
is given by $z(t)=\exp(t\xi)\cdot \langle \langle \xi_Q(q),\cdot \rangle \rangle$, where $\langle \langle \cdot, \cdot \rangle \rangle$ denotes the
Riemannian metric \eqref{E:Metric-on-Q} on $Q$. The element $\xi\in \frak{g}$ is called the velocity of the relative equilibrium. We will refer
interchangeably to both $z(t)$ and $p_q= \langle \langle \xi_Q(q),\cdot \rangle \rangle$ as the relative equilibrium.

If $z(t)$ is a relative equilibrium then so is $g\cdot z(t)$ for any $g\in G$ with the same stability properties as $z(t)$.
The velocity of $g\cdot z(t)$  is ${\rm Ad}_g\xi$.

Due to the equivariance of the momentum map, a relative equilibrium $p_q$ with velocity $\xi$ and $J(p_q)=\mu$ must satisfy
\begin{equation}
\label{E:Nec-Cond}
\rm{ad}_\xi^*\mu=0.
\end{equation}

Pairs $(q,\xi)$ corresponding to relative equilibria are characterized by the condition
\begin{equation}
\label{E:caractRE}
dV_\xi(q)=0,
\end{equation}
 where $V_\xi\in C^\infty(Q)$ is the
{\em augmented potential} defined by
\begin{equation*}
V_\xi(q)=V(q)-\frac{1}{2}\langle \I(q)\xi, \xi \rangle.
\end{equation*}
Here $ \I(q):\slt\to \slt ^*$ is the {\em locked inertia tensor} defined as
\begin{equation}
\label{E:Locked-In}
\langle \I(q)\xi, \eta \rangle =\langle \langle \xi_Q(q),\eta_Q(q)\rangle \rangle.
\end{equation}

Since relative equilibria come in group orbits, according to Proposition \ref{P:Canonic-Conf}, we restrict our  study of existence of relative equilibria assuming that the masses
are in canonical configuration.

Let $p_q$ be a relative equilibrium with velocity $\xi=E\xi_e+H\xi_h+P\xi_p$. One can readily check, using \eqref{E:J} that
\begin{equation}
\label{E:Mom-Trig}
J(p_q)=\frac{m_2(\cos\theta_2+\cos\theta_1)}{2\sin^2\theta_2\cos\theta_1}\left ( \begin{array}{cc} H &
(1-2\cos \theta_1\cos \theta_2)P+\cos\theta_1\cos\theta_2E \\
P-\cos\theta_1\cos\theta_2E & -H \end{array} \right )
\end{equation}
%
%
%
where we have used \eqref{E:TwoPHypREcond}.

We start the classification of relative equilibria by applying condition \eqref{E:Nec-Cond} in order to isolate possible candidates.
%
%
 In view of our identification 
 \eqref{E:Dual-identification} this is equivalent to the condition $[J(p_q), \xi]=0$ where $[\cdot , \cdot ]$ is the matrix
commutator. Up to
an unessential non-vanishing factor this commutator is
\begin{equation*}
\left ( \begin{array}{cc} P(E-P) & -H((1+\cos\theta_1\cos\theta_2)E-(1+2\cos\theta_1\cos\theta_2)P) \\
 -H((1+\cos\theta_1\cos\theta_2)E-P) & -P(E-P) \end{array} \right ).
\end{equation*}
There are only three non-trivial possibilities for the above matrix to be equal to zero:
\begin{enumerate}
\item[(a)] $E=P=0$, $H=\omega \neq 0$. The Lie algebra generator $\xi=\omega\xi_h$ is hyperbolic.
\item[(b)] $H=P=0$, $E=\omega \neq 0$. The Lie algebra generator $\xi=\omega\xi_e$ is elliptic.
\item[(c)]  $H=0$, $P=E=\omega\neq 0$, The Lie algebra generator $\xi=\omega(\xi_e+\xi_p)$ is hyperbolic.
\end{enumerate}

In particular notice that the above conditions do not include the possibility of having $H=E=0$ and $P\neq 0$. This, using Proposition \ref{ellip-hyper-para} and \ref{P:Canonic-Conf},
gives an alternative proof to
the following result given before in \cite{Diacu2}.
\begin{proposition}
There do not exist relative equilibria for the two-body problem in $\Ht$ with a parabolic Lie algebra generator.
\end{proposition}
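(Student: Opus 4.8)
The plan is to leverage the preparatory work already done in this section, namely the necessary condition \eqref{E:Nec-Cond} together with the reduction to canonical configuration afforded by Proposition \ref{P:Canonic-Conf}. By Proposition \ref{P:Canonic-Conf}, any relative equilibrium is conjugate (via an element of $\rm{SL}(2,\R)$) to one in which the masses are in canonical configuration; since conjugation by $g$ sends the velocity $\xi$ to $\rm{Ad}_g\xi$, and since Proposition \ref{ellip-hyper-para} shows that the elliptic/hyperbolic/parabolic type of a Lie algebra element is preserved under the adjoint action, it suffices to rule out parabolic velocities for relative equilibria in canonical configuration. This is precisely the setting in which the explicit commutator computation just preceding the statement is valid.

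First I would recall that the analysis of condition \eqref{E:Nec-Cond}, rewritten as $[J(p_q),\xi]=0$ using the identification \eqref{E:Dual-identification}, has already isolated the three exhaustive non-trivial cases (a), (b), (c) for the coefficients $(E,H,P)$ of $\xi=E\xi_e+H\xi_h+P\xi_p$. The key observation, stated in the excerpt, is that none of these three admissible cases has the form $H=E=0$ with $P\neq 0$. By Proposition \ref{ellip-hyper-para}(2), an element of $\slt$ is parabolic if and only if it is $\pm\rm{Ad}_g\xi_p$ for some $g$, and a velocity in canonical configuration is parabolic precisely when it is a non-zero multiple of $\xi_p$ alone, i.e.\ $H=E=0$, $P\neq 0$ — the one case the commutator condition forbids. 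Thus no relative equilibrium in canonical configuration can have a parabolic generator.

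The remaining step is to upgrade this conclusion from canonical configurations to arbitrary ones. Here I would invoke Proposition \ref{P:Canonic-Conf} in the contrapositive direction: were there a relative equilibrium with parabolic velocity $\xi$ in a general configuration, conjugating by the $g\in\rm{SL}(2,\R)$ that brings the masses into canonical configuration would produce a relative equilibrium (with the same stability and existence status) whose velocity $\rm{Ad}_g\xi$ is again parabolic by Proposition \ref{ellip-hyper-para}, contradicting what was just established. This closes the argument.

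I do not anticipate a genuine obstacle, since the substantive work — the derivation of \eqref{E:Mom-Trig} and the explicit commutator, and the enumeration of cases (a)–(c) — is carried out in the excerpt immediately before the statement. The only point requiring slight care is the reduction step: one must confirm that the notion of a relative equilibrium, together with its velocity transforming by $\rm{Ad}_g$, is genuinely $\rm{SL}(2,\R)$-equivariant, so that ruling out parabolic velocities in canonical configuration is logically equivalent to ruling them out in general. This equivariance is exactly the content of the remark preceding \eqref{E:Nec-Cond} (that $g\cdot z(t)$ is a relative equilibrium with velocity $\rm{Ad}_g\xi$), so the argument is complete.
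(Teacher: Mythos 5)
Your overall strategy --- reduce to canonical configuration via Proposition \ref{P:Canonic-Conf}, use the ${\rm Ad}$-invariance of the elliptic/hyperbolic/parabolic type from Proposition \ref{ellip-hyper-para}, and then appeal to the case analysis of $[J(p_q),\xi]=0$ --- matches the paper's, but there is a genuine gap in the middle step. You assert that ``a velocity in canonical configuration is parabolic precisely when it is a non-zero multiple of $\xi_p$ alone, i.e.\ $H=E=0$, $P\neq 0$.'' The ``only if'' direction of this claim is false. Putting the masses in canonical configuration normalizes the \emph{positions}, not the velocity: the conjugating element supplied by Proposition \ref{P:Canonic-Conf} is determined by the configuration and bears no relation to the $g$ appearing in Proposition \ref{ellip-hyper-para}(ii), so after the reduction the velocity $\xi=E\xi_e+H\xi_h+P\xi_p$ is still an arbitrary element of $\slt$. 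The parabolic elements form a two-dimensional cone, $E^2-H^2-2EP=0$ with $\xi\neq 0$, not the line $\R\,\xi_p$; for instance $2\xi_e+\xi_p=\left(\begin{smallmatrix}0&0\\1&0\end{smallmatrix}\right)$ is parabolic. As written, your argument only excludes velocities proportional to $\xi_p$, which is strictly weaker than the Proposition.

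The repair is short and is what the argument really rests on: the commutator condition leaves only the three cases (a), (b), (c), and one must check that \emph{each} of them is non-parabolic. Cases (a) and (b) are hyperbolic and elliptic by inspection; the case you must not skip is (c), $\xi=\omega(\xi_e+\xi_p)$, which on its face is a candidate parabolic element but is in fact hyperbolic, since $\xi_e+\xi_p=\left(\begin{smallmatrix}0&1/2\\1/2&0\end{smallmatrix}\right)$ has real eigenvalues $\pm\tfrac12$. With that verification, the ${\rm Ad}$-invariance of the type and Proposition \ref{P:Canonic-Conf} finish the proof exactly as you outline. Note that had case (c) happened to be parabolic, your argument would still have ``concluded'' the Proposition while the statement would have been false --- which is the sign that the missing check is essential rather than cosmetic.
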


We now use equation \eqref{E:caractRE} to investigate when
possibilities (a), (b), (c), are realized for the particular  potential
energy \eqref{E:Potential}.

Using \eqref{E:Infinit-Generator1} and \eqref{E:Locked-In} we obtain the following representation for the
locked inertia tensor referred to the bases $\{\xi_e,\xi_h,\xi_p\}$ and $\{\mu_e, \mu_h, \mu_p\}$.
\begin{equation*}
\I(x_1,y_1,x_2,y_2)=\left ( \begin{array}{ccc} I_{11} & I_{12} & I_{13} \\  I_{12} & I_{22} & I_{23} \\  I_{13} & I_{23} & I_{33} \end{array}
\right ),
\end{equation*}
with
\begin{equation*}
\begin{split}
I_{11}&=\frac{1}{4}\sum_{i=1}^2\frac{m_i((x_i^2+y_i^2+1)^2-4y_i^2)}{y_i^2}, \qquad I_{22}=\sum_{i=1}^2\frac{m_i(x_i^2+y_i^2)}{y_i^2},
\qquad  I_{33}=\sum_{i=1}^2\frac{m_i}{y_i^2}, \\
I_{12}&=-\frac{1}{2}\sum_{i=1}^2\frac{m_ix_i(1+x_i^2+y_i^2)}{y_i^2},\qquad I_{13}=-\frac{1}{2}\sum_{i=1}^2\frac{m_i(1+x_i^2-y_i^2)}{y_i^2},
\qquad I_{23}=\sum_{i=1}^2\frac{m_ix_i}{y_i^2}.
\end{split}
\end{equation*}

%
%
%

With the above formulae in hand, one obtains the following expressions  for the augmented potential $V_\xi$ in cases (a), (b) and (c).
\begin{equation*}
\begin{split}
V_{\omega\xi_h}(x_1,y_1,x_2,y_2)&=V(x_1,y_1,x_2,y_2)-\frac{\omega^2}{2}\sum_{i=1}^2\frac{m_i(x_i^2+y_i^2)}{y_i^2}\, , \\
V_{\omega\xi_e}(x_1,y_1,x_2,y_2)&=V(x_1,y_1,x_2,y_2)-\frac{\omega^2}{8}\sum_{i=1}^2\frac{m_i(x_i^2+(y_i-1)^2)(x_i^2+(y_i+1)^2)}{y_i^2}\, ,
\\
V_{\omega(\xi_e+\xi_p)}(x_1,y_1,x_2,y_2)&=V(x_1,y_1,x_2,y_2)-\frac{\omega^2}{8}\sum_{i=1}^2\frac{m_i((x_i-1)^2+y_i)^2)((x_i+1)^2+y_i^2)}{y_i^2}\, .
\end{split}
\end{equation*}

A direct calculation with MAPLE\texttrademark\,  that is too long to be included here,
shows that, in cases (a) and (b),  $q=(\cos\theta_1, \sin \theta_1, -\cos\theta_2, \sin\theta_2)$ is a critical point of $V_\xi$ if and only if
\begin{equation}
\label{E:omega-2Hyp-trig}
\omega^2=k\frac{\sin^2\theta_2 \sin^2 \theta_1}{(\cos \theta_2 + \cos \theta_1 )^2}\left ( \frac{\sin^2 \theta_2}{\cos \theta_2} m_1 \right ) = k\frac{\sin^2\theta_2 \sin^2 \theta_1}{(\cos \theta_2 + \cos \theta_1 )^2}\left ( \frac{\sin^2 \theta_1}{\cos \theta_1} m_2 \right ) .
\end{equation}
On the other hand, if $\xi$ satisfies condition (c), then a calculation with MAPLE\texttrademark\, shows that
  condition \eqref{E:caractRE}  implies
\begin{equation*}
m_2^2k \cos\theta_2\sin^2\theta_2\sin^4\theta_1(\cos \theta_1+\cos \theta_2)=0.
\end{equation*}
Considering that $k, m_2>0$, and the restrictions on $\theta_1, \theta_2$ coming from  Definition \ref{D:CanonicalConfiguration},
it follows that this condition can never hold. Therefore we have shown the following.

\begin{theorem}
\label{T:Exist-Rel-Eq}
For any values of $m_1, m_2>0,$ there exist exactly two families of relative equilibria for the two body problem in $\Ht$.
Representatives of  these families in a canonical configuration $q\in Q$ are $(q,\omega \xi_h)$ and $(q,\omega \xi_e)$ where
$\omega$ satisfies \eqref{E:omega-2Hyp-trig}.
\end{theorem}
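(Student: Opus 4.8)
The plan is to assemble the ingredients established above into a complete classification. Since relative equilibria come in ${\rm SL}(2,\R)$-orbits (if $z(t)$ is a relative equilibrium then so is $g\cdot z(t)$, with velocity ${\rm Ad}_g\xi$), and since conjugation preserves the property of being a relative equilibrium, Proposition \ref{P:Canonic-Conf} lets me assume without loss of generality that the masses sit in the canonical configuration $q=(\cos\theta_1,\sin\theta_1,-\cos\theta_2,\sin\theta_2)$, with the angles constrained by \eqref{E:TwoPHypREcond}. Thus it suffices to determine, for each such $q$, all velocities $\xi\in\slt$ for which $(q,\xi)$ is a relative equilibrium; the general relative equilibria are then their ${\rm SL}(2,\R)$-conjugates.

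The determination proceeds by two successive filters. First I would impose the necessary condition \eqref{E:Nec-Cond}, which under the identification \eqref{E:Dual-identification} becomes the vanishing of the matrix commutator $[J(p_q),\xi]$. Writing $\xi=E\xi_e+H\xi_h+P\xi_p$ and substituting the explicit momentum \eqref{E:Mom-Trig}, this commutator equals (up to a nonzero scalar) the displayed $2\times2$ matrix whose entries are polynomials in $E,H,P$; setting all entries to zero leaves exactly the three admissible cases (a), (b), (c). In particular no solution has $H=E=0$ with $P\neq0$, so by Proposition \ref{ellip-hyper-para} this already reproves the nonexistence of relative equilibria with a parabolic generator.

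Second, for each surviving case I would impose the full characterization \eqref{E:caractRE}, namely $dV_\xi(q)=0$ for the augmented potential $V_\xi=V-\tfrac12\langle\I(q)\xi,\xi\rangle$. Inserting the locked inertia tensor and the potential \eqref{E:Potential} yields the three explicit expressions for $V_{\omega\xi_h}$, $V_{\omega\xi_e}$ and $V_{\omega(\xi_e+\xi_p)}$ recorded above. Differentiating, one finds that in cases (a) and (b) the point $q$ is critical precisely when $\omega^2$ satisfies \eqref{E:omega-2Hyp-trig}, whereas in case (c) the criticality condition forces $m_2^2k\cos\theta_2\sin^2\theta_2\sin^4\theta_1(\cos\theta_1+\cos\theta_2)=0$, which is incompatible with $k,m_2>0$ and $0<\theta_1,\theta_2<\tfrac\pi2$. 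Hence only the hyperbolic family $(q,\omega\xi_h)$ and the elliptic family $(q,\omega\xi_e)$ survive. Both are nonempty for every $m_1,m_2>0$, since under the constraints on the angles the right-hand side of \eqref{E:omega-2Hyp-trig} is strictly positive and so admits a real solution $\omega$.

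The main obstacle is the critical-point computation in the second filter: the gradients of the augmented potentials are lengthy rational expressions in $\theta_1,\theta_2$, and distilling them to the clean condition \eqref{E:omega-2Hyp-trig} and the case-(c) obstruction requires systematically using the canonical-configuration relation \eqref{E:TwoPHypREcond} to eliminate the mass ratio; this is the step I would delegate to computer algebra. A secondary point to verify is that the two surviving families are genuinely distinct and exhaust everything: distinctness follows because $\xi_h$ is hyperbolic while $\xi_e$ is elliptic (Proposition \ref{ellip-hyper-para}), and exhaustiveness follows because every relative equilibrium is conjugate to one in canonical configuration and must pass both filters, the second of which is not merely necessary but a complete characterization.
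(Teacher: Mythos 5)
Your proposal is correct and follows essentially the same route as the paper: reduce to canonical configuration via Proposition \ref{P:Canonic-Conf}, filter candidate velocities by the commutator condition $[J(p_q),\xi]=0$ to obtain cases (a), (b), (c), and then apply the criticality condition $dV_\xi(q)=0$ (with the heavy algebra done by computer) to extract \eqref{E:omega-2Hyp-trig} in cases (a) and (b) and to rule out case (c). The only additions beyond the paper's argument are your explicit remarks on nonemptiness and distinctness of the two families, which are harmless and correct.
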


The existence of a representative of the relative equilibrium $(q,\omega \xi_h)$ in the general case $m_1\neq m_2$  is proved in Theorem 7 of \cite{Diacu2}.
The condition of this theorem, given in equation (96)\footnote{We note that there is a typo in this equation and it 
should read $\frac{m_1}{m_2}=-\frac{\beta_1y_1}{\beta_2y_2}$ as it is easily deduced from the proof of the theorem.},   is readily seen to be equivalent 
to \eqref{E:TwoPHypREcond}.
The existence of the relative equilibrium
$(q,\omega \xi_e)$ for $m_1\neq m_2$ was established in Theorem 3 of \cite{Diacu2} using the Poincar\'e disk model. 
We have just shown that, up to conjugation,
these are the only ones. Following the terminology in  \cite{Diacu2}, and in accordance to the nature of the Lie algebra generator, we shall refer to these families  as  {\em hyperbolic} or {\em elliptic}, respectively.

Notice that the orbits in $\Ht$ generated by hyperbolic (respectively elliptic) generators are unbounded (respectively bounded).
The same is true for hyperbolic and elliptic relative equilibria in $Q$.

\begin{remark}
In chapter 4 of \cite{Duistermaat} a relative equilibrium is called {\em elliptic} if the one-parameter subgroup $\{\exp(\xi t) : t\in \R\}$
generated by its
velocity $\xi$ is dense in a torus subgroup of the corresponding symmetry group. Relative equilibria that are not elliptic are
called {\em runaway relative equilibria}. In our case,  the
uni-parametric subgroup
$\{\exp(\xi_et) : t\in \R\}$ of ${\rm SL}(2, \R)$ is closed and isomorphic to $\mathrm{S}^1$ (a one-dimensional torus) while the  uni-parametric group generated
by $\xi_h$ is not compact (and hence its closure cannot be contained in a torus). Thus, our terminology for elliptic relative equilibria is consistent with that
of \cite{Duistermaat}, while our hyperbolic relative equilibria correspond to runaway relative equilibria.  From the general theory developed in  \cite{Duistermaat} it follows
that our elliptic relative equilibria are periodic whereas our hyperbolic relative equilibria ``run out" of any compact set $K\subset T^*Q$ as $t\to \pm \infty$ (i.e. they are unbounded curves).
\end{remark}
%

\section{Nonlinear stability}
\label{S:Stability}

Let $z: \R \to T^*Q$ be one of the relative equilibria found in Section \ref{S:Existence} and $J(z(0)) = \mu$ the initial momentum associated with $z$. Then, as we know, $J(z(t)) = \mu$, for all $t$. So, if $z(t) = \exp(\xi t)\cdot z(0)$, with $\xi \in {\frak g}$ then, using the equivariance of the momentum map $J$, we deduce that $\mu = J(\exp(\xi t)\cdot z(0)) = \exp(\xi t) \cdot \mu$, which implies that $\exp(\xi t)$ belongs to $G_\mu$, where $G_\mu$ is the stabilizer of the
momentum $\mu$ under the coadjoint representation of ${\rm SL(2, \R)}$ on  $\slt ^*$. Thus, the relative equilibrium $z$ is contained in the $G_\mu$-orbit through $z(0)$.

In this section we discuss the $G_\mu$-stability of the relative equilibrium $z(t)$. The precise definition of $G_\mu$-stability (see \cite{Patrick}) is the following:

A relative equilibrium $z: \R \to T^*Q$ is said to be $G_\mu$-stable if for every $G_\mu$-invariant open neighborhood $U$ of $z(\R)$ there exists an open subset $W$ in $T^*Q$, such that $z(\R) \subseteq W \subseteq U$, and with the property that for every solution $z'$ 
with initial condition in $W$, we have that $z'(t) \in U$, for all $t$ .

Since we are dealing with a simple mechanical system, we use the {\em reduced Energy-Momentum method} (REM) of Simo et al \cite{Simo},
which studies the signature of the restricted Hessian $d^2_{p_q}h_\xi |_N$. Here, $h_\xi$ is the {\em augmented Hamiltonian} defined
as  (see \cite{Ma})
\begin{equation*}
h_\xi=h-\langle J(\cdot),\xi\rangle,
\end{equation*}
and $N$ is a complement to $\g_\mu\cdot p_q$ in  $T_zJ^{-1}(\mu)$, where $\mu=J(p_q)$.

The reduced Energy-Momentum provides a convenient block-diagonalization of this bilinear form. Let $p_q$ be
a relative equilibrium with velocity $\xi$ and momentum $\mu$. Then
\begin{equation}
\label{E:REM-matrix}
d^2_{p_q}h_\xi |_N\simeq \left ( \begin{array}{ccc} {\rm Ar}& 0 & 0 \\ 0 & (d_q^2V_\xi+ {\rm corr})|_{ V_{\rm int}}& 0\\ 0 & 0 &  K \end{array} \right )
\end{equation}
relative to a splitting
\begin{equation}
\label{E:REM-split}
N\simeq V_{\rm rig}\oplus V_{\rm int}\oplus V_{\rm int}^*.
\end{equation}
Here $K$ is a positive definite bilinear form on $V_{\rm int}^*$ and equations \eqref{E:REM-matrix} and \eqref{E:REM-split}
(and therefore, the reduced Energy-Momentum method) hold if the bilinear form ${\rm Ar}: V_{\rm rig}\times V_{\rm rig}\to \R$ is non-degenerate.

Assuming that $G_\mu$ 
 is compact,
a sufficient condition for $G_\mu$-stability 
of the relative equilibrium $p_q$ is that $d^2_{p_q}h_\xi |_N$ be definite.
If the conditions of REM hold, then this is equivalent to both ${\rm Ar}$ and $d_q^2V_\xi+ {\rm corr} $ being positive definite.
On the other hand, an odd number of negative eigenvalues of $d^2_{p_q}h_\xi |_N$ implies spectral instability
and therefore, instability regardless of the compacity of $G_\mu$ \cite{Ma}.

Now we define the different spaces appearing in the splitting \eqref{E:REM-split}. First,
\begin{equation*}
V_{\rm rig}=\left \{ \lambda\in \g \, :\,  \langle \I(q)\lambda, \eta \rangle =0 \; \forall \eta \in \g_\mu \, \right \}
\end{equation*}
where $\g_\mu$ is the Lie algebra of $G_\mu$. Now let $(\g_\mu\cdot q)^\perp\subset T_qQ$ be the orthogonal
complement of $\g_\mu\cdot q$ with respect to the Riemannian metric \eqref{E:Metric-on-Q}, then
\begin{equation}
\label{E:DefVint}
V_{\rm int}=\left \{v\in (\g_\mu\cdot q)^\perp \, :\,  \I(q)^{-1}((D\I  \cdot v)(\xi))\in \g_\mu  \, \right \}.
\end{equation}

The block ${\rm Ar}$ is defined by
\begin{equation*}
{\rm Ar}(\lambda_1, \lambda_2)=\left \langle {\rm ad}^*_{\lambda_1}\mu \, , \, \I(q)^{-1}( {\rm ad}^*_{\lambda_2}\mu) +
{\rm ad}_{\lambda_2}(\I(q)^{-1}\mu) \right \rangle,
\end{equation*}
while
\begin{equation*}
{\rm corr}(v_1,v_2)= \left \langle (D\I  \cdot v_1)(\xi) \, , \, \I(q)^{-1}( (D\I  \cdot v_2)(\xi) ) \right \rangle .
\end{equation*}

For the sequel, it is useful to notice that in a canonical configuration, using \eqref{E:TwoPHypREcond}, the expression for the
locked inertia tensor  simplifies to
\begin{equation}
\label{E:Locked-Canonical}
\I( \theta_1,\theta_2)=\frac{m_2(\cos\theta_1+\cos\theta_2)}{\sin^2\theta_2}\left ( \begin{array}{ccc}
\cos\theta_2 & 0 &-\cos\theta_2 \\
0& \frac{1}{\cos\theta_1} & 0 \\
-\cos\theta_2 & 0 & \frac{1}{\cos\theta_1}
\end{array}\right )
\end{equation}

\subsection{Hyperbolic case}
\label{SS:Hyp-Stab}
We consider a hyperbolic relative equilibrium $(q,\xi)$ where $q=(\cos \theta_1, \sin \theta_1, -\cos \theta_2, \sin \theta_2)$,
$\xi = \omega \xi_h$, and where condition \eqref{E:omega-2Hyp-trig}
holds. Using \eqref{E:Mom-Trig} we obtain
\begin{equation*}
\mu=J(p_q)= \omega m_2\frac{(\cos\theta_1+\cos\theta_2)}{\sin^2\theta_2 \cos\theta_1}\mu_h.
\end{equation*}
where $\omega$ satisfies \eqref{E:omega-2Hyp-trig}

In view of (\ref{structure-constants}), \eqref{E:dualbasis} and \eqref{E:adstar}
we obtain that $\g_\mu$ is spanned by $\xi_h$.

Using \eqref{E:Locked-Canonical} the space $V_{\rm rig}$ is spanned by
\begin{equation*}
\zeta_1=\xi_e, \qquad \zeta_2=\xi_p.
\end{equation*}
With respect to this basis, the bilinear form ${\rm Ar}$ is represented by the matrix
\begin{equation*}
{\rm Ar}=\frac{m_2\omega^2(\cos\theta_1+\cos\theta_2)\cos\theta_2}{\sin^2\theta_2(1-\cos\theta_1\cos\theta_2)}\left ( \begin{array}{cc} 1& -1\\
-1  & \frac{1}{\cos^2\theta_1\cos^2\theta_2} \end{array}\right ).
\end{equation*}

Since ${\rm Ar}_{11}$ and $\det ({\rm Ar})$ are both positive, then ${\rm Ar}$ is positive definite and therefore
REM is applicable since \eqref{E:REM-matrix} and   \eqref{E:REM-split} hold.

We now compute the internal space $V_{\rm int}$. Since  $\g_\mu$ is spanned by $\xi_h$, in view of \eqref{E:Infinit-Generator1} we obtain
\begin{equation*}
\g_\mu\cdot q={\rm span}\left \{ \left ( \cos \theta_1 \, ,\, \sin\theta_1 \, ,\, -\cos\theta_2 \, ,\, \sin \theta_2 \right ) \right  \}.
\end{equation*}
From this
\begin{equation*}
(\g_\mu\cdot q)^\perp=\left \{ \left (-a\frac{\sin\theta_1}{\cos\theta_1}+b-c\frac{\sin\theta_2}{\cos\theta_2}, a, b ,
c \right )\, :\, a,b,c\in \R \, \right \}.
\end{equation*}

If we now use \eqref{E:DefVint} we find that the space $V_{\rm int}$ is one-dimensional. After choosing a particular
generator
\begin{equation*}
v=\left ( -\sin^2\theta_1\cos\theta_1(\cos^2\theta_2+1) , \sin\theta_1\cos^2\theta_1(\cos^2\theta_2+1), \sin^2\theta_2\cos\theta_2(\cos^2\theta_1+1) , \sin\theta_2\cos^2\theta_2(\cos^2\theta_1+1) \right ),
\end{equation*}
 we obtain
 \begin{equation}
 \label{E:crucial-expression-hyper}
(d_q^2V_\xi+ {\rm corr})|_{ V_{\rm int}}(v,v)=-\frac{km_2^2\cos\theta_2\sin^4\theta_1(\cos\theta_1\cos\theta_2+1)(\cos^2\theta_1+\sin^2\theta_1\cos^2\theta_2+3)}{(\cos\theta_1+\cos\theta_2)\cos(\theta_1)},
\end{equation}
which is negative.

Therefore, since $d^2_{p_q}h_\xi |_N$ has signature $(+,+,-,+)$, then every hyperbolic relative equilibrium is unstable.

\subsection{Elliptic case}

We consider an elliptic relative equilibrium $(q,\xi)$ where $q=(\cos \theta_1, \sin \theta_1, -\cos \theta_2, \sin \theta_2)$,
$\xi = \omega \xi_e$, and where condition \eqref{E:TwoPHypREcond}
holds. Using \eqref{E:J} we obtain
\begin{equation*}
\mu=J(p_q)=m_2\frac{\cos \theta_2(\cos\theta_1+\cos \theta_2)\omega}{\sin^2\theta_2}(\mu_e-\mu_p).
\end{equation*}
In view of \eqref{E:dualbasis} we obtain that
\begin{equation}
\label{E:Mom-Elliptic}
\mu=-m_2\frac{\cos \theta_2(\cos\theta_1+\cos \theta_2)\omega}{\sin^2\theta_2}\xi_e.
\end{equation}
From this is clear that $G_\mu\simeq {\rm S}^1$ with Lie algebra $\g_\mu={\rm span} \{ \xi_e \}$. In particular, $G_\mu$ is compact.

We now compute $V_{\rm rig}$. From expression  \eqref{E:Locked-Canonical} we find that it is spanned by
\begin{equation*}
\zeta_1=\xi_h, \qquad \zeta_2=\xi_e+
\xi_p.
\end{equation*}
With respect to this basis, the bilinear form ${\rm Ar}$ is represented by the matrix
\begin{equation*}
{\rm Ar}=\frac{m_2\omega^2(\cos\theta_1+\cos\theta_2)\cos\theta_2}{\sin^2\theta_2}
\left ( \begin{array}{cc} \frac{1}{1-\cos\theta_1\cos\theta_2} & 0 \\ 0 & 1+\cos\theta_1\cos\theta_2 \end{array} \right ).
\end{equation*}
Since ${\rm Ar}_{11}$ and  ${\rm Ar}_{22}$  are both positive, then ${\rm Ar}$ is positive definite and therefore
REM is applicable since \eqref{E:REM-matrix} and   \eqref{E:REM-split} hold.

We now compute the internal space $V_{\rm int}$. We start by obtaining $\g_\mu\cdot q$. From the above expression for $\g_\mu$,
and \eqref{E:Infinit-Generator1} we obtain
\begin{equation*}
\g_\mu\cdot q={\rm span}\left \{ \left ( -\cos^2\theta_1 \, ,\, -\cos\theta_1\sin\theta_1 \, ,\, -\cos^2\theta_2  \, ,\, \cos \theta_2\sin \theta_2 \right ) \right  \}.
\end{equation*}
From this
\begin{equation*}
(\g_\mu\cdot q)^\perp=\left \{ \left (a, \frac{c\sin\theta_2-a\cos\theta_1-b\cos\theta_2}{\sin\theta_1}, b,  c \right )\, :\, a,b,c\in \R \, \right \}.
\end{equation*}

If we now use \eqref{E:DefVint} we find that the space $V_{\rm int}$ is one-dimensional.
 After choosing a particular
generator
\begin{equation*}
w=\left ( -\sin^2\theta_1\cos\theta_1(\cos^2\theta_2+1) , \sin\theta_1\cos^2\theta_1(\cos^2\theta_2+1), \sin^2\theta_2\cos\theta_2(\cos^2\theta_1+1) , \sin\theta_2\cos^2\theta_2(\cos^2\theta_1+1) \right ),
\end{equation*}
we obtain
 \begin{equation}
 \label{E:crucial-expression-ellip}
(d_q^2V_\xi+ {\rm corr})|_{ V_{\rm int}}(w,w)=\frac{m_2^2k v(1-u^2)^2(1+uv)F(u,v)}{u(u+v)},
\end{equation}
with
\begin{equation}
\label{E:Fuv}
F(u,v)=1-3u^2v^2-u^2-v^2,
\end{equation}
where $u=\cos \theta_1$ and $v=\cos \theta_2$. It is readily seen that the sign of \eqref{E:crucial-expression-ellip}
is the sign of $F(u,v)$.

Notice that $u$ and $v$ are related by \eqref{E:TwoPHypREcond}. We can express
\begin{equation}
\label{E:vofu}
v(u)=\frac{u^2-1+\sqrt{(u^2-1)^2+4c^2u^2}}{2cu},
\end{equation}
where $c=\frac{m_1}{m_2}$. It is readily seen that
\begin{equation*}
\lim_{u\to0}v(u)=0, \qquad \lim_{u\to1}v(u)=1, \qquad v'(u)>0.
\end{equation*}
Therefore, $(d_q^2V_\xi+ {\rm corr})|_{ V_{\rm int}}(w,w)$ is a positive multiple of $f(u)=F(u,v(u))$.
It is easy to see that
\begin{equation*}
\lim_{u\to0}f(u)=1, \qquad \lim_{u\to1}f(u)=-4.
\end{equation*}
Using the chain rule we find that $f$ is decreasing and therefore  $f(u)$ changes sign at a unique value $u_0\in (0,1)$.
Therefore, for $0<u<u_0$ the signature of $d^2_{p_q}h_\xi |_N$ is $(+,+,+,+)$ and the elliptic relative equilibrium
is nonlinearly stable, and for  $u_0<u<1$ the signature of $d^2_{p_q}h_\xi |_N$ is $(+,+,-,+)$ and the elliptic relative equilibrium
is unstable.

Substitution of \eqref{E:vofu} into  \eqref{E:Fuv} shows, after some algebraic manipulations,
that $u_0$ is characterized as the unique zero between $0$ and $1$ of the polynomial
\begin{equation*}
p(x)=3x^8+(16c^2-8)x^6+6x^4-1=(3x^2+1)(x^2-1)^3+16c^2x^6.
\end{equation*}
More precisely, the elliptic relative equilibrium is stable if $p(u)<0$ and unstable if $p(u)>0$. Recalling that  $u=\cos\theta_1$, 
 the condition for stability can therefore be written as
\begin{equation}
\label{E:StabCond}
\frac{m_1}{m_2}<\frac{\sin^3\theta_1\sqrt{3\cos^2\theta_1+1}}{4\cos^3\theta_1}.
\end{equation}
This formula will be useful to give an intrinsic description of the stability conditions of elliptic relative equilibria
in section \ref{S:Intrinsic}.

In view of \eqref{E:vofu}, the momentum \eqref{E:Mom-Elliptic} can be written as a function of $u$.
The change in stability is related to a critical value of the momentum $\mu(u)$ as the following proposition shows. This behavior of the stability of a parametrized family of relative equilibria with respect to a critical point of the norm of the momentum seems to be more than a mere coincidence since it has been observed in a similar context in other systems. See for instance Remark 3.13 in \cite{Teix-Ro}.

\begin{proposition}
Let $||\cdot ||$ be any vector space norm in $\slt ^*$; then $||\mu(u)||$ is increasing for  $0<u<u_0$ and decreasing for $u_0<u<1$,
where $u_0$ is the unique zero of $f(u)=F(u,v(u))$ in the interval $(0,1)$.
\end{proposition}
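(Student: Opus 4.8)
The plan is to exploit the special structure of the momentum displayed in \eqref{E:Mom-Elliptic}, which shows that along the elliptic family $\mu(u)$ is, for every admissible $u$, a scalar multiple of the single fixed vector $\xi_e\in\slt\cong\slt^*$. Writing $\mu(u)=g(u)\,\xi_e$ with
\[
g(u)=-m_2\frac{\cos\theta_2(\cos\theta_1+\cos\theta_2)\,\omega}{\sin^2\theta_2},
\]
any vector space norm satisfies $||\mu(u)||=|g(u)|\cdot||\xi_e||$, and $||\xi_e||$ is a positive constant. Hence the assertion is automatically independent of the chosen norm, and the proposition reduces to showing that the scalar function $|g(u)|$ is increasing on $(0,u_0)$ and decreasing on $(u_0,1)$.

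Next I would express $g(u)^2$ as an explicit function of $u$. Setting $u=\cos\theta_1$, $v=\cos\theta_2$ and inserting the value of $\omega^2$ from \eqref{E:omega-2Hyp-trig} gives
\[
g(u)^2=k\,m_2^3\,\frac{v^2(1-u^2)^2}{u(1-v^2)}.
\]
The canonical configuration constraint \eqref{E:TwoPHypREcond} reads $c:=m_1/m_2=\frac{v(1-u^2)}{u(1-v^2)}$, and substituting this relation collapses the expression to the simple form
\[
g(u)^2=k\,m_2^3\,c\,\bigl(v(1-u^2)\bigr),
\]
with $v=v(u)$ given by \eqref{E:vofu}.

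Finally I would differentiate. Implicit differentiation of the constraint $c\,u(1-v^2)=v(1-u^2)$ yields
\[
v'(u)=\frac{v(1+u^2)(1-v^2)}{u(1-u^2)(1+v^2)}>0,
\]
which also reconfirms the sign $v'>0$ used earlier. A short computation then gives
\[
\frac{d}{du}\bigl(v(1-u^2)\bigr)=v'(1-u^2)-2uv
=\frac{v\,\bigl(1-3u^2v^2-u^2-v^2\bigr)}{u(1+v^2)}
=\frac{v\,F(u,v)}{u(1+v^2)},
\]
the essential point being that the bracketed numerator collapses to exactly $F(u,v)$ of \eqref{E:Fuv}. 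Since $k\,m_2^3c$, $v$, $u$ and $1+v^2$ are all positive, the sign of $\frac{d}{du}g(u)^2$ coincides with the sign of $f(u)=F(u,v(u))$. As established immediately before the statement, $f$ is decreasing with $f(0^+)=1>0$, $f(1^-)=-4<0$ and a unique zero $u_0\in(0,1)$; therefore $g(u)^2$, and hence $||\mu(u)||$, increases on $(0,u_0)$ and decreases on $(u_0,1)$. The only genuine work is the algebraic verification that the numerator reduces precisely to $F$, which simultaneously explains why the momentum extremum occurs exactly at the stability threshold $u_0$; everything else is structural.
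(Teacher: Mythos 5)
Your proposal is correct and follows essentially the same route as the paper: both reduce the claim to the observation that $\mu(u)$ is a scalar multiple of the fixed vector $\xi_e$, both obtain $v'$ by implicit differentiation of the constraint \eqref{E:TwoPHypREcond}, and both show that the derivative of the scalar factor collapses (up to positive factors) to $F(u,v)$, so that the sign analysis is inherited from the earlier discussion of $f(u)=F(u,v(u))$. The only cosmetic difference is that you differentiate $g(u)^2\propto v(1-u^2)$ after using the constraint to simplify, whereas the paper differentiates $\lVert\mu(u)\rVert=\lambda\sqrt{u(1-v^2)}$ directly; these expressions are proportional via the constraint and the computations are equivalent.
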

\begin{proof}
Substitution of \eqref{E:omega-2Hyp-trig} into \eqref{E:Mom-Elliptic} (using (\ref{E:TwoPHypREcond}))
 yields
\begin{equation*}
\mu=\mp\sqrt{km_2}m_1\sqrt{\cos\theta_1(1-\cos^2\theta_2)}\xi_e=\mp\sqrt{km_2}m_1\sqrt{u(1-v^2)}\xi_e.
\end{equation*}
Therefore,
\begin{equation*}
||\mu(u)||=\lambda\sqrt{u(1-v^2)},
\end{equation*}
where the positive number $\lambda=\sqrt{km_2}m_1||\xi_e||$. We have
\begin{equation*}
\frac{d||\mu(u)||}{du}=\lambda\frac{1-v^2-2uvv'}{2\sqrt{u(1-v^2)}}.
\end{equation*}
On the other hand, the relation  \eqref{E:TwoPHypREcond} is written in terms of $u$ and $v$ as
\begin{equation*}
c=\left ( \frac{1-u^2}{u}\right ) \left (\frac{v}{1-v^2} \right ).
\end{equation*}
Implicit differentiation with respect to $u$ yields
\begin{equation*}
v'=\frac{(u^2+1)(1-v^2)v}{(1-u^2)(1+v^2)u}.
\end{equation*}
Hence
\begin{equation*}
\frac{d||\mu(u)||}{du}=\lambda\frac{(1-v^2)(1-3u^2v^2-u^2-v^2)}{2\sqrt{u(1-v^2)}(1-u^2)(1+v^2)}=\lambda\frac{(1-v^2)F(u,v)}{2\sqrt{u(1-v^2)}(1-u^2)(1+v^2)},
\end{equation*}
that has the sign of $F(u,v)$.

\end{proof}

Two physical implications of  $G_\mu$-stability 
may be given using a general result for symmetric Hamiltonian systems and a standard result for the $2$-body problem in an arbitrary Riemannian manifold (see the Appendix). In fact, let $z: \R \to T^*Q$ be a $G_\mu$-stable  relative equilibrium. Then, we have:
\begin{enumerate}
\item
All the solutions sufficiently close to $z$ are bounded. 
\item
If $r_0 > 0$ is the constant hyperbolic distance between the two particles along $z$, it follows that the hyperbolic distance between the two particles remains arbitrarily close to $r_0$ along all the solutions that are sufficiently close to $z$.
\end{enumerate}
 Properties (i) and (ii) follow using Propositions \ref{bounded} and \ref{constant-distance} (in the Appendix), respectively. For (i) to hold it is essential that $G_\mu$ is compact.


\section{Intrinsic description}
\label{S:Intrinsic}

We  now give an intrinsic description of the relative equilibria of the problem  involving only  Riemannian data and without referring to a
canonical configuration. Therefore, this description will be valid for any  model of $\htt$.

\begin{theorem}
\label{T:HypN2}
Consider two masses $m_1$ and $m_2$ in the abstract hyperbolic two-dimensional space $\htt$. A hyperbolic relative
equilibrium corresponds to the solution characterized in the following way.
 Let $d_i$,  $i=1,2$ denote the distance between the mass $m_i$ and  the hyperbolic center of mass of $m_1$ and $m_2$, $d:=d_1+d_2$ the
distance between the masses,
and let $\omega^2$ satisfy:
\begin{equation*}
\omega^2=\frac{2km_1}{\sinh^2(d)\sinh(2d_2)}=\frac{2km_2}{\sinh^2(d)\sinh(2d_1)}.
\end{equation*}

Then, at every instant during the motion,
\begin{enumerate}
\item $d_i$ is constant,
\item  the velocity vectors  $v_1$ and $v_2$ are  perpendicular
to the unique geodesic that contains $m_1$ and $m_2$ and are equally oriented,
\item the hyperbolic norm of $v_i$ is constant and equal to  $|\omega| \cosh(d_i)$,
\item  the hyperbolic center of mass moves along the geodesic that is perpendicular to the geodesic containing $m_1, m_2$.
Its velocity vector $v$ has constant hyperbolic norm equal to $|\omega|$.
\end{enumerate}
\end{theorem}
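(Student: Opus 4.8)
The plan is to reduce to a canonical configuration and then verify the four assertions by explicit computation in the $\Ht$-model, exploiting that every quantity appearing in the statement (the distances $d_i$ and $d$, perpendicularity to the geodesic, the equal-orientation of the perpendicular velocities, and the hyperbolic norms) is invariant under the orientation-preserving isometries supplied by $\mathrm{SL}(2,\R)$. By Proposition \ref{P:Canonic-Conf} there is a $g\in\mathrm{SL}(2,\R)$ carrying the two masses into canonical configuration, and since it sends a relative equilibrium of velocity $\xi$ to one of velocity $\mathrm{Ad}_g\xi$ while preserving all Riemannian data, it suffices to treat the case where the masses sit at $(\cos\theta_1,\sin\theta_1)$ and $(-\cos\theta_2,\sin\theta_2)$ with $0<\theta_1,\theta_2<\tfrac{\pi}{2}$ subject to \eqref{E:TwoPHypREcond}. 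By Lemma \ref{resul-conf-can} the hyperbolic center of mass is then $(0,1)$, and by Proposition \ref{P:PropHypSpace2} the geodesic joining the masses is the upper unit semicircle.

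First I would write the trajectory explicitly. For a hyperbolic relative equilibrium $\xi=\omega\xi_h$, so by \eqref{E:Hyp-Par} the diagonal flow scales both particles, $q(t)=e^{\omega t}\cdot q$, and by equivariance of the center of mass under isometries it sits at $(0,e^{\omega t})$ at time $t$. Since $(x,y)\mapsto e^{\omega t}(x,y)$ is a hyperbolic isometry, assertion (i) (constancy of $d_1,d_2$, hence of $d$) is immediate, as is the constancy of the norms in (iii) and (iv); it therefore suffices to verify the geometric statements at $t=0$.

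For (ii) and (iii) I would use $(\xi_h)_{\Ht}(x,y)=(x,y)$ from \eqref{E:Infinit-Generator1}, so the velocity of each particle at $t=0$ is $\omega$ times its position, a Euclidean-radial vector at that point. Conformality of the $\Ht$-model then gives perpendicularity to the unit circle. Parametrizing the semicircle by $\ell(s)=(\cos s,\sin s)$, mass $1$ sits at $s=\theta_1$ and mass $2$ at $s=\pi-\theta_2$, and a direct check gives $\det\bigl(v_1\mid\dot\ell(\theta_1)\bigr)=\det\bigl(v_2\mid\dot\ell(\pi-\theta_2)\bigr)=\omega$, so the two perpendicular velocities share the orientation of $\omega$ and are equally oriented. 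The hyperbolic norm of $v_i=\omega\cdot(\text{position}_i)$ is $|\omega|/\sin\theta_i$, which by \eqref{trigon-hyper} equals $|\omega|\cosh(d_i)$. For (iv) the center of mass traces the vertical geodesic $x=0$, which meets the unit circle orthogonally at $(0,1)$, and its velocity $(0,\omega e^{\omega t})$ at the point of height $e^{\omega t}$ has hyperbolic norm $|\omega|$.

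Finally I would translate the algebraic condition \eqref{E:omega-2Hyp-trig} into intrinsic form. Substituting $\sin\theta_i=\sech(d_i)$ and $\cos\theta_i=\tanh(d_i)$ from \eqref{trigon-hyper}, and using $\sinh d_1\cosh d_2+\cosh d_1\sinh d_2=\sinh(d)$ together with $\sinh(2d_i)=2\sinh d_i\cosh d_i$, a short manipulation reduces \eqref{E:omega-2Hyp-trig} to $\omega^2=\tfrac{2km_1}{\sinh^2(d)\sinh(2d_2)}$, and symmetrically to the expression in $m_2,d_1$; the equality of the two amounts to $m_1\sinh(2d_1)=m_2\sinh(2d_2)$, which is exactly \eqref{E:TwoPHypREcond} rewritten via \eqref{trigon-hyper} and is the defining relation of the hyperbolic center of mass. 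I expect the only genuinely delicate point to be the orientation bookkeeping in (ii)—fixing an orientation of the geodesic and of $\Ht$ so that the definition of \emph{equally oriented} is applied consistently at both base points; the remaining steps are routine hyperbolic identities.
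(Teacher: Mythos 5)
Your proposal is correct and follows essentially the same route as the paper's proof: reduction to the canonical configuration via Proposition \ref{P:Canonic-Conf}, the explicit flow $e^{\omega t}\cdot q$ from \eqref{E:Hyp-Par}, conformality for perpendicularity, the norm computation $|\omega|/\sin\theta_i=|\omega|\cosh(d_i)$ via \eqref{trigon-hyper}, and the vertical geodesic for the center of mass. The only differences are cosmetic --- you verify the geometry at $t=0$ and invoke isometry invariance where the paper works at general $t$, and your determinant check of equal orientation and your explicit derivation of the intrinsic formula for $\omega^2$ are slightly more detailed than the paper's.
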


\begin{proof}
It is straightforward to see that (i) holds since the action is by isometries. To prove the other statements we work
 on the $\Ht$ model. By Proposition \ref{P:Canonic-Conf} it is sufficient to show that the statements hold for the
 relative equilibrium in canonical configuration. Using \eqref{E:Hyp-Par} it is seen that such relative equilibrium is given by
 \begin{equation*}
(x_1(t),y_1(t))=e^{t\omega}(\cos\theta_1,\sin \theta_1), \qquad (x_2(t),y_2(t))=e^{t\omega}(-\cos\theta_2,\sin \theta_2),
\end{equation*}
where $\omega$ is given by \eqref{E:omega-2Hyp-trig}.

At the time $t$, the masses lie along the geodesic $\ell_t$ that consists of the upper
half of the circle of radius $e^{\omega t}$  centered at $(0,0)$. The velocity vector of each particle points along the radial direction
and is clearly perpendicular to this geodesic (we are using that the $\Ht$ model  is conformal). Moreover, since $e^{t\omega} \sin \theta_1$ and $e^{t\omega} \sin \theta_2$ have the same sign, the velocity vectors
are seen to have the same orientation so we have proved (ii).

Next, the vector $v_1=\omega e^{t\omega}(\cos\theta_1,\sin \theta_1)$ and its hyperbolic norm is given by
\begin{equation*}
\sqrt{\frac{(\omega e^{t\omega}\cos\theta_1)^2+( \omega e^{t\omega}\sin\theta_1)^2)}{(y_1(t))^2}}=\frac{|\omega|}{\sin\theta_1}.
\end{equation*}
Substituting  \ref{trigon-hyper} into \eqref{E:omega-2Hyp-trig}
yields the expression for $\omega$ given in the statement of the theorem. Equation  \eqref{trigon-hyper}
also implies that $\frac{1}{\sin\theta_1}=\cosh(d_1)$. So we have shown that (iii) holds for $v_1$. An analogous  calculation
shows that it holds for $v_2$.

Finally, note that the hyperbolic center of mass is located at $e^{t\omega}(0,1) = (0,e^{t\omega})$ at time $t$. It therefore traverses  a vertical
geodesic that is always perpendicular to $\ell_t$. The hyperbolic norm of its velocity vector is readily computed to be $|\omega|$.

\end{proof}

A schematic description of the initial conditions that give rise to a hyperbolic relative equilibrium is given in Figure \ref{F:Hyp-RE}.

\begin{figure}[ht]

\centering
\includegraphics[width=12cm]{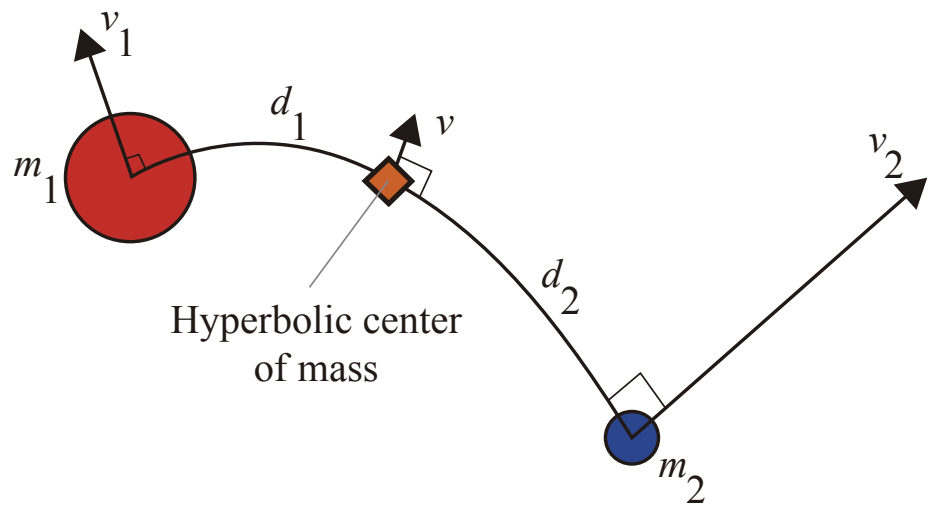}
\caption{Schematic representation of the initial conditions for a hyperbolic relative equilibrium. The vectors $v_1, v_2$ are perpendicular to the geodesic passing through $m_1, m_2$. Their respective hyperbolic norms are $|\omega|\cosh(d_i)$ where $\omega$ satisfies \eqref{E:OmEllip}. During the motion, the hyperbolic
center of mass moves along a geodesic that is perpendicular to the one connecting $m_1$ and $m_2$ and its velocity vector $v$ has
constant hyperbolic norm $|\omega|$.}\label{F:Hyp-RE}
\end{figure}

The above theorem has a nice geometrical interpretation. Consider two particles in $\htt$ whose velocities are
perpendicular to the geodesic that contains them and  point in the same direction.  There are two competing
effects that act on the particles.
 On the one hand, due to their inertia, they tend to follow  geodesic curves. These curves spread apart since we
are in hyperbolic space. On the other hand, there is the gravitational effect that pulls the particles together. The theorem tells
us that there exist unique, constant values of the particles' speeds that exactly balance the above effects, and the distance between
the particles remains constant throughout the motion. This type of motion is impossible in euclidean space since in this space parallel  lines
do not spread apart.

Our discussion in Section \ref{SS:Hyp-Stab} shows that all of these relative equilibria are unstable.

The intrinsic description of elliptic relative equilibria is given by the following.

\begin{theorem}
\label{T:EllN2}
Consider two masses $m_1$ and $m_2$ in the abstract hyperbolic two-dimensional space $\htt$. An elliptic relative
equilibrium corresponds to the solution characterized in the following way.

 Let $d_i$,  $i=1,2$ denote the distance between the mass $m_i$ and  the hyperbolic center of mass of $m_1$ and $m_2$, $d:=d_1+d_2$ the
distance between the masses,
and let $\omega^2$ satisfy:
\begin{equation}
\label{E:OmEllip}
\omega^2=\frac{2km_1}{\sinh^2(d)\sinh(2d_2)}=\frac{2km_2}{\sinh^2(d)\sinh(2d_1)}.
\end{equation}

Then, throughout the motion,
\begin{enumerate}
\item $d_i$ is constant,
\item  the velocity vectors  $v_1$ and $v_2$ are  perpendicular
to the unique geodesic that contains $m_1$ and $m_2$ and have opposite orientations,
\item the hyperbolic norm of $v_i$ is constant and equal to  $|\omega| \sinh(d_i)$,
\item  the hyperbolic center of mass is fixed during the motion,
\item the motion is periodic with period $\frac{2\pi}{|\omega|}$.
\end{enumerate}
\end{theorem}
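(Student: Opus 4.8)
The plan is to mirror the proof of Theorem~\ref{T:HypN2}, the only change being that elliptic relative equilibria are generated by $\xi=\omega\xi_e$, whose flow is the one-parameter group of hyperbolic rotations $\exp(\omega t\,\xi_e)=g_{\omega t}$ about the point $(0,1)$. Statement~(i) is then immediate because the motion is by isometries, so each distance $d_i$ to the isometry-covariant center of mass is preserved; and statement~(iv) follows at once, since by Lemma~\ref{resul-conf-can} the hyperbolic center of mass of the canonical configuration is $(0,1)$, which is exactly the fixed point of every $g_{s}$. By Proposition~\ref{P:Canonic-Conf} it therefore suffices to check (ii), (iii) and (v) for the canonical configuration $q=(\cos\theta_1,\sin\theta_1,-\cos\theta_2,\sin\theta_2)$.

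Next I would write the solution explicitly from \eqref{E:Ellip}. The crucial simplification is that each mass starts on the unit circle, so $x^2+y^2=1$ and the formula collapses: a short computation gives
\[
(x_1(t),y_1(t))=\frac{(\cos\theta_1\cos(\omega t),\,\sin\theta_1)}{1+\cos\theta_1\sin(\omega t)},\qquad
(x_2(t),y_2(t))=\frac{(-\cos\theta_2\cos(\omega t),\,\sin\theta_2)}{1-\cos\theta_2\sin(\omega t)}.
\]
Differentiating at $t=0$ yields $v_1(0)=-\omega\cos\theta_1(\cos\theta_1,\sin\theta_1)$ and $v_2(0)=\omega\cos\theta_2(-\cos\theta_2,\sin\theta_2)$. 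Both are Euclidean-radial at points of the unit circle $\ell_0$, so by conformality of the $\Ht$ model they are hyperbolically perpendicular to $\ell_0$, establishing the perpendicularity half of (ii). For (iii), the hyperbolic norm of $v_1(0)$ is $|\omega|\cos\theta_1/\sin\theta_1=|\omega|\cot\theta_1$; since \eqref{trigon-hyper} gives $\sinh(d_i)=\cos\theta_i/\sin\theta_i$, this equals $|\omega|\sinh(d_1)$, and the same computation handles $v_2$. Constancy in $t$ is free because $v_i(t)=d\Phi_t(v_i(0))$ for the isometry $\Phi_t=\Psi(g_{\omega t},\cdot)$. Finally, the value of $\omega$ in \eqref{E:OmEllip} coincides with the hyperbolic one and is obtained by substituting \eqref{trigon-hyper} into \eqref{E:omega-2Hyp-trig}.

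The delicate step is the orientation claim in (ii), which is precisely where the elliptic case must diverge qualitatively from the hyperbolic one. I would parametrize $\ell_0$ by $\phi\mapsto(\cos\phi,\sin\phi)$, so that $m_1$ sits at $\phi=\theta_1$ and $m_2$ at $\phi=\pi-\theta_2$, and then compare the orientations of the ordered bases $\{v_1,\dot\ell(\theta_1)\}$ and $\{v_2,\dot\ell(\pi-\theta_2)\}$ by evaluating the two $2\times 2$ determinants. These come out as $-\omega\cos\theta_1$ and $+\omega\cos\theta_2$; since $\theta_1,\theta_2\in(0,\tfrac{\pi}{2})$ their opposite signs show that $v_1$ and $v_2$ are \emph{oppositely} oriented, in contrast with the equally oriented vectors of the hyperbolic flow. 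Reversing the parametrization of $\ell_0$ flips both determinants, so the conclusion is parametrization-independent, as it must be. I expect the sign bookkeeping here (the choice of parametrization and of orientation convention) to be the only genuine source of error, so I would fix the convention carefully against Figures~\ref{F:eq-op-orient} and~\ref{F:eq-op-orient-vert}.

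For (v), I would observe that $g_{s+2\pi}=-g_s$ together with $\Psi(g,\cdot)=\Psi(-g,\cdot)$ makes the action $t\mapsto\Psi(g_{\omega t},\cdot)$ periodic of period $2\pi/|\omega|$ in $t$. Minimality of the period follows because any $g_s$ with $0<s<2\pi$ is a nontrivial rotation about $(0,1)$ and hence moves $m_1\neq(0,1)$, so the configuration cannot recur earlier. This completes the verification of all five statements for the canonical configuration, and Proposition~\ref{P:Canonic-Conf} transfers them to the general case.
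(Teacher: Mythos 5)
Your proposal is correct and follows essentially the same route as the paper's proof: reduce to the canonical configuration via Proposition \ref{P:Canonic-Conf}, write the orbit explicitly from \eqref{E:Ellip}, and verify each item by direct computation, with \eqref{trigon-hyper} converting the trigonometric data into the stated hyperbolic quantities. The only (harmless) difference is that you verify perpendicularity, orientation and the norms at $t=0$ and propagate them by the isometry flow $\Psi(\exp(\omega t\,\xi_e),\cdot)$, whereas the paper performs the computation at arbitrary $t$ using an arc-length parametrization of the geodesic $\ell_t$.
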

\begin{proof}
We proceed in analogy with the proof of Theorem \ref{T:HypN2}
and work with a canonical configuration in the $\Ht$-model. According to
\eqref{E:Ellip}, such relative equilibrium is given by
 \begin{equation*}
(x_1(t),y_1(t))=\frac{(\cos\theta_1\cos (t\omega),\sin \theta_1)}{1+\cos\theta_1\sin (t\omega)}, \qquad
(x_2(t),y_2(t))=\frac{(-\cos\theta_2\cos(t\omega),\sin \theta_2)}{1-\cos\theta_2\sin (t\omega)},
\end{equation*}
where $\omega$ is given by \eqref{E:omega-2Hyp-trig}. Recall that substitution of \eqref{trigon-hyper} into \eqref{E:omega-2Hyp-trig}
yields the expression for $\omega$ given in the statement of the theorem. The above formulae prove (v).

As before, statement (i) follows since the action is by isometries.
Using \eqref{E:Ellip} with $x=0, y=1,$ shows that
the hyperbolic center of mass is fixed during the motion so we have shown (iv).

 Next, the geodesic that contains $m_1$ and $m_2$ at time $t$ is either the semi-circle
\begin{equation*}
\ell_t=\left \{ (x,y)\in \Ht \, :\, \left (x+\tan(t\omega) \right )^2+y^2={\rm sec}^2(t\omega) \right \} \qquad {\rm if } \qquad t\neq \frac{(k+\frac{1}{2})\pi}{\omega}, \; k\in \Z,
\end{equation*}
or the vertical line
\begin{equation*}
\ell_t=\left \{ (x,y)\in \Ht \, :\, x=0 \right \} \qquad {\rm if } \qquad t= \frac{(k+\frac{1}{2})\pi}{\omega}, \; k\in \Z.
\end{equation*}
In either case, $\ell_t$ is conveniently parametrized by hyperbolic arc-length as
\begin{equation}
\label{E:Param-geod}
x_g(s)=\frac{\tanh(s)\cos(t\omega)}{1+\tanh(s)\sin(t\omega)}, \qquad  y_g(s)=\frac{\sech(s)}{1+\tanh(s)\sin(t\omega)}, \qquad s\in \R.
\end{equation}
Using \eqref{trigon-hyper} one checks that such parametrization satisfies
\begin{equation*}
(x_g(d_1),y_g(d_1))=(x_1(t),y_1(t)), \qquad (x_g(-d_2),y_g(-d_2))=(x_2(t),y_2(t)).
\end{equation*}
One computes
\begin{equation*}
x_g'(s)=\frac{\sech^2(s)\cos(t\omega)}{(1+\tanh(s)\sin(t\omega))^2}, \qquad
y_g'(s)=-\frac{\tanh(s) \sech(s)+\sech(s)\sin(t\omega)}{(1+\tanh(s)\sin(t\omega))^2},
\end{equation*}
 where $'=\frac{d}{ds}$. Using \eqref{trigon-hyper} it is seen that the velocity vectors $v_i=(\dot x_i(t), \dot y_i(t))$, $i=1,2$ can be written as
 \begin{equation*}
v_1=\omega \frac{\cos\theta_1}{\sin\theta_1}\left (y_g'(d_1),-x_g'(d_1) \right ), \qquad v_2=-\omega \frac{\cos\theta_2}{\sin\theta_2}
\left (y_g'(-d_2),-x_g'(-d_2)\right ).
\end{equation*}
Since $\Ht$ is conformal, this shows that both vectors are perpendicular to $\ell_t$ and are oppositely oriented (note that $x'_g(d_1)$ and $x'_g(-d_2)$ have the same sign). This proves (ii).
Moreover, from the above expression we have that the hyperbolic norm of $v_1$ is
\begin{equation*}
\left | \omega \frac{\cos\theta_1}{\sin\theta_1} \right |\sqrt{\frac{y_g'(d_1)^2+x_g'(d_1)^2}{y_g(d_1)^2}}=| \omega |\frac{\cos\theta_1}{\sin\theta_1}=|\omega|\sinh(d_1)
\end{equation*}
where we have used \eqref{trigon-hyper} and that the parametrization \eqref{E:Param-geod} is by hyperbolic arc-length.
Analogously, the hyperbolic norm of $v_2$ is $|\omega|\sinh(d_2)$ and we have shown (iii).
\end{proof}

A schematic description of the initial conditions that give rise to an elliptic relative equilibrium is given in Figure \ref{F:Ellip-RE}.

\begin{figure}[ht]

\centering
\includegraphics[width=12cm]{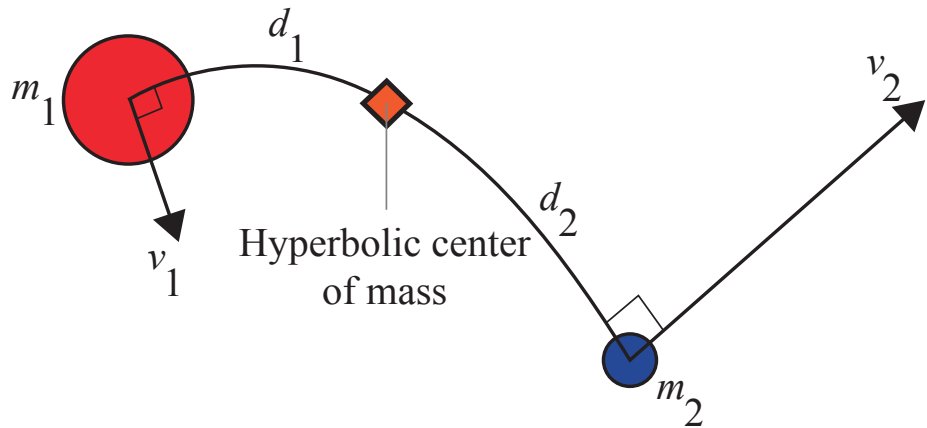}
\caption{Schematic representation of the initial conditions for an elliptic relative equilibrium. The vectors $v_1, v_2$ are perpendicular to the geodesic passing through $m_1, m_2$. Their respective hyperbolic norms are $|\omega|\sinh(d_i)$ where $\omega$ satisfies \eqref{E:OmEllip}. During the motion, the hyperbolic
center of mass is at rest.}\label{F:Ellip-RE}
\end{figure}

The stability results for elliptic relative equilibria can also be formulated in an intrinsic way. For this matter recall that the
 stability condition can be expressed by  equation \eqref{E:StabCond}. Using  \eqref{trigon-hyper} we conclude
that the elliptic relative equilibrium described in Theorem \ref{T:EllN2} is $G_\mu$-stable if
\begin{equation*}
\frac{m_1}{m_2}< \frac{\sqrt{3\tanh^2(d_1)+1}}{4\sinh^3(d_1)},
\end{equation*}
and it is unstable if the reverse inequality holds.

 Figure \ref{F:Stability-diagram} illustrates the values of $d_1$ for which the solution is stable for a given ratio $\frac{m_1}{m_2}$. In essence, the stability is lost when the masses are far away.
\begin{figure}[ht]

\centering
\includegraphics[width=6cm]{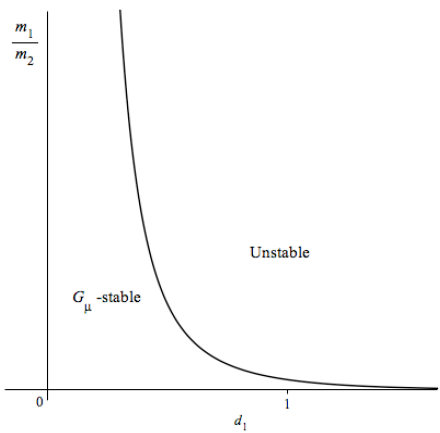}\caption{Stability properties of elliptic relative equilibria as a function of $d_1$ and the mass ratio  $\frac{m_1}{m_2}$.}  \label{F:Stability-diagram}
\end{figure}

\begin{remark}
\label{R:Center-of-mass}
It is remarkable that the hyperbolic center of mass in both types of relative equilibria has a behavior that is similar to its euclidean counterpart. Namely, it follows a geodesic at constant speed or it
stays fixed during the motion. As numerical calculations show, this kind of behavior should not be expected for generic solutions of the problem. This is usually referred to as the ``absence of the integral of center of mass"
(see \cite{FNDiacu}).

\end{remark}

%
%
%
%
%

\renewcommand{\thesection}{\sc{Appendix}}

\section{}
\label{A:almost-constant-distance}

\renewcommand{\thesection}{A}

In this Appendix, we will prove some general results on $G_\mu$-stable relative equilibria for symmetric Hamiltonian systems.  Although these results seem to be quite
standard, we include them for the sake of completeness.

Let $(P, \Omega)$ be a symplectic manifold with symplectic $2$-form $\Omega$, $G$ a Lie group which acts symplectically on $P$, $J: P \to {\frak g}^*$ an equivariant momentum map and $H: P \to \R$ a $G$-invariant Hamiltonian function.

Suppose that $z: \R \to P$ is a relative equilibrium,
\[
z(t) = \exp(\xi t) \cdot z(0), \; \; \mbox{ with } \xi \in {\frak g},
\]
and that $J(z(0)) = \mu$. Using Noether Theorem, the equivariance of the momentum map $J$ and proceeding as at the beginning of Section \ref{S:Stability}, we deduce that $z(\R) \subseteq G_\mu \cdot z(0)$, that is, $z(\R)$ is contained in the $G_\mu$-orbit by $z(0)$.
 
Next, assume that $z$ is a $G_\mu$-stable relative equilibrium. This means that for every $G_\mu$-invariant open subset $U$ of $P$, with $z(\R) \subseteq U$, there exists an open subset $W$ of $P$ such that $z(\R) \subseteq W \subseteq U$ and for every solution $z'$ of the Hamilton equations with initial condition in $W$, we have that $z'(t) \in U$, for all $t$ (see \cite{Patrick}).

Now, we will consider two special cases:

{\bf 1. The isotropy group $G_\mu$ is compact.}
In this case, we will prove that solutions sufficiently close to $z$ are bounded. In fact, we will see that the following result holds.
\begin{proposition} \label{bounded}
If $z: \R \to P$ is a $G_\mu$-stable relative equilibrium and $G_\mu$ is compact then there exists an open subset $W$ of $T^*Q$, such that $z(\R) \subseteq W$, and for every solution $z'$ of the Hamilton equations with initial condition in $W$ we have that $z'$ is bounded.
\end{proposition}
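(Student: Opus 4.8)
The plan is to exploit the compactness of $G_\mu$ to manufacture a $G_\mu$-invariant open neighborhood of $z(\R)$ whose closure is compact, and then feed this neighborhood into the definition of $G_\mu$-stability. The whole argument rests on the observation, already recorded above, that $z(\R) \subseteq G_\mu \cdot z(0)$. Since $G_\mu$ is compact and the orbit map $g \mapsto g \cdot z(0)$ is continuous, the orbit $G_\mu \cdot z(0)$ is a compact subset of $P$, and in particular $z(\R)$ is contained in a compact set.

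First I would produce a $G_\mu$-invariant open set $U$ with compact closure containing $z(\R)$. Because $P$ is a manifold it is locally compact, so the compact set $G_\mu \cdot z(0)$ admits an open neighborhood $U_0$ with compact closure $\overline{U_0}$. This $U_0$ need not be $G_\mu$-invariant, so I would saturate it by setting
\[
U = G_\mu \cdot U_0 = \bigcup_{g \in G_\mu} g \cdot U_0,
\]
which is open as a union of open sets and $G_\mu$-invariant by construction, and which still contains $G_\mu \cdot z(0) \supseteq z(\R)$. The key point is that $U$ remains relatively compact: the set $G_\mu \cdot \overline{U_0}$ is the image of the compact product $G_\mu \times \overline{U_0}$ under the continuous action map $G_\mu \times P \to P$, hence compact and in particular closed; since it contains $U$ we obtain $\overline{U} \subseteq G_\mu \cdot \overline{U_0}$, so $\overline{U}$ is compact.

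With this $U$ in hand the conclusion is immediate. Applying the definition of $G_\mu$-stability to the $G_\mu$-invariant open neighborhood $U$ of $z(\R)$ furnishes an open set $W$ with $z(\R) \subseteq W \subseteq U$ such that every solution $z'$ of Hamilton's equations with initial condition in $W$ satisfies $z'(t) \in U$ for all $t$. Thus $z'(\R) \subseteq U \subseteq \overline{U}$, and since $\overline{U}$ is compact, the solution $z'$ is bounded, as required.

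The only delicate point is the simultaneous requirement that the trapping neighborhood be $G_\mu$-invariant, so that $G_\mu$-stability may be applied to it, and relatively compact, so that confinement to it yields boundedness. Reconciling these two demands is precisely where the compactness hypothesis on $G_\mu$ enters, through the fact that the $G_\mu$-saturation of a relatively compact set is again relatively compact. Beyond this bookkeeping I do not expect any genuine obstacle.
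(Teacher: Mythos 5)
Your proof is correct and follows essentially the same route as the paper's: saturate a relatively compact neighborhood by the compact group $G_\mu$ to get a $G_\mu$-invariant, relatively compact open set $U \supseteq z(\R)$, then invoke $G_\mu$-stability to trap nearby solutions inside $U$. The only cosmetic difference is that you take $U_0$ to be a neighborhood of the whole compact orbit $G_\mu\cdot z(0)$ whereas the paper takes a neighborhood of the single point $z(0)$ before saturating; both yield the same conclusion.
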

\begin{proof}
Let $U_0$ be an open subset of $P$, with $z(0) \in U_0$, and $K$ a compact subset such that $U_0 \subset K$. Consider the open subset $U$ of $P$ given by
\[
U = G_\mu \cdot U_0.
\]
It is clear that $U \subseteq G_\mu \cdot K$ and $G_\mu \cdot K$ is compact. Thus, $U$ is bounded. Moreover, $U$ is $G_\mu$-invariant and $z(\R) \subseteq U$. Therefore, there exists an open subset $W$ in $P$, such that for every solution $z'$  with initial condition in $W$, we have that $z'(t) \in U$, for all $t$. Now, using that $U$ is bounded, we conclude that $z'$ also is a bounded curve.
\end{proof}
{\bf 2. The Hamiltonian system is the $2$-body problem in an arbitrary Riemannian manifold.}
In this case, $P$ is the cotangent bundle of a manifold $Q$ endowed with the canonical symplectic structure, where 
\[
Q = (M \times M) \setminus \Delta,
\]
with $M$ an arbitrary Riemannian manifold and $\Delta$ the diagonal in $M \times M$.

Moreover, we will suppose that $G$ is a Lie subgroup of the isometry Lie group of $M$ and the action of $G$ on $T^*Q$ is the cotangent lift of the diagonal action of $G$ on $Q$.

In addition, the Hamiltonian function $H: T^*Q \to \R$ on $T^*Q$ is given by
\[
H(p_1, p_2) = \displaystyle \frac{1}{2m_1} \| p_1 \|^2 + \frac{1}{2m_2} \| p_2 \|^2 + V(d(x_1, x_2)),
\]
with $m_1, m_2 > 0$, $p_1 \in T_{x_1}^*M$, $p_2 \in T_{x_2}^*M$, $(x_1, x_2) \in Q$ and $V: \R^+ \to \R$ a smooth function.
As in the paper,  $||\cdot ||$ denotes the norm in the fibers of $T^*Q$ induced by the Riemannian metric. On the other hand, $d: Q \to \R$ is the restriction to $Q$ of the Riemannian distance in $M\times M$. 


Now, denote by $r$ the map defined by $r = d \circ \pi_Q$, with $\pi_Q: T^*Q \to Q$ the canonical projection. Then, since $z(\R)$ is contained in the $G_\mu$-orbit of $z(0)$ and $G$ is an isometry group, we have that
\[
(r\circ z)(t) = (r \circ z)(0) : = r_0 > 0, \; \; \mbox{ for every } t \in \R.
\] 
Furthermore, we may prove that the Riemannian distance between two particles along a solution sufficiently close to  the $G_\mu$-stable relative equilibrium $z$ is almost equal to $r_0$.
\begin{proposition}\label{constant-distance}
For every $\epsilon > 0$, there exists an open subset $W$ in $T^*Q$ such that $z(\R) \subseteq W$ and for every solution $z'$ of the Hamiltonian system $(T^*Q, H)$ with initial condition in $W$ we have that
\[
| r(z'(t)) - r_0 | < \epsilon, \; \; \forall t.
\]
\end{proposition}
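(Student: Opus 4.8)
The plan is to deduce this proposition directly from the $G_\mu$-stability of $z$ by using the continuity of the distance function $r = d \circ \pi_Q$ together with the observation that $r$ is $G_\mu$-invariant. The key geometric fact is that, because $G$ acts by isometries on $M$ (and hence diagonally by isometries on $Q$), the Riemannian distance $r$ between the two particles is constant along every $G_\mu$-orbit; this is exactly what gives $(r \circ z)(t) \equiv r_0$. The strategy is therefore to build, for a given $\epsilon > 0$, a $G_\mu$-invariant open neighborhood $U$ of $z(\R)$ on which $r$ stays within $\epsilon$ of $r_0$, and then invoke the definition of $G_\mu$-stability to extract the desired $W$.

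First I would consider the open interval $(r_0 - \epsilon, r_0 + \epsilon) \subseteq \R^+$ (shrinking $\epsilon$ if necessary so that $r_0 - \epsilon > 0$) and form the preimage
\[
U = r^{-1}\big( (r_0 - \epsilon, r_0 + \epsilon) \big) \subseteq T^*Q.
\]
Since $r$ is continuous, $U$ is open. Since $r$ is $G_\mu$-invariant (because $G_\mu \subseteq G$ acts by isometries and $r$ measures Riemannian distance, which isometries preserve), the set $U$ is $G_\mu$-invariant. Moreover, $z(\R) \subseteq U$ because $r(z(t)) = r_0$ for all $t$, and $r_0$ lies in the open interval. Thus $U$ is a $G_\mu$-invariant open neighborhood of $z(\R)$ of exactly the type required by the definition of $G_\mu$-stability.

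Then I would apply the hypothesis that $z$ is $G_\mu$-stable to this particular $U$: there exists an open subset $W$ of $T^*Q$ with $z(\R) \subseteq W \subseteq U$ such that every solution $z'$ of the Hamiltonian system with initial condition in $W$ satisfies $z'(t) \in U$ for all $t$. By the definition of $U$, the condition $z'(t) \in U$ is precisely the statement that $r(z'(t))$ lies in $(r_0 - \epsilon, r_0 + \epsilon)$, i.e. $|r(z'(t)) - r_0| < \epsilon$ for all $t$, which is the conclusion of the proposition.

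The argument is essentially a soft topological packaging of the stability definition, so there is no serious analytic obstacle; the only point requiring care is the verification that $r$ is genuinely $G_\mu$-invariant, which rests on $G$ being a subgroup of the isometry group of $M$ and on the action on $T^*Q$ being the cotangent lift of the diagonal action on $Q$ (so that it descends to the isometric diagonal action on $Q$ under $\pi_Q$). I would state this invariance explicitly and note that it is the same fact already used to show $(r \circ z)(t) \equiv r_0$. Beyond this, one should confirm that $\epsilon$ can be taken small enough that $r_0 - \epsilon > 0$, keeping $U$ inside the domain $\R^+$ of $V$, but this is immediate since $r_0 > 0$.
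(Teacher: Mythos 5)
Your proposal is correct and follows essentially the same route as the paper: both take $U = r^{-1}\bigl((r_0-\epsilon,\,r_0+\epsilon)\bigr)$, note that it is an open, $G_\mu$-invariant (indeed $G$-invariant) neighborhood of $z(\R)$ because $r$ is built from the isometry-invariant Riemannian distance, and then apply the definition of $G_\mu$-stability to obtain $W$. Your extra remarks on the invariance of $r$ and on keeping $r_0-\epsilon>0$ are sound but not needed beyond what the paper already records.
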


\begin{proof}
Consider the open subset of $T^*Q$
\[
U = r^{-1}(r_0 - \epsilon, r_0 + \epsilon).
\]
Then, $U$ is $G$-invariant and $z(\R) \subseteq U$. As a consequence of $G_\mu$-stability of the relative equilibrium $z$, there exists an open subset $W \subseteq T^*Q$, with $z(\R) \subseteq W$, such that
\[
z'(t) \in U, \; \; \forall t,
\]
for any solution  $z'$  with
initial condition in $W$.
Therefore,
\[
| r(z'(t)) - r_0 | < \epsilon, \; \; \forall t.
\]
\end{proof}
It is easy to prove that a natural extension of the previous result also holds for the more general case of the $n$-body problem in $M$, with $ n \geq 2$.

\subsection*{Acknowledgments} We thank Miguel Teixid\'o-Rom\'an for a useful conversation and Florin Diacu for his suggestions to improve our paper. 
We also are thankful to R. Ch\'avez-Tovar for his help with the edition of the Figures in the paper. This work has been
partially supported by the project  PAPIIT IA103815, 
by MEC (Spain) Grants MTM2011-22585, MTM2012-34478, MTM2014-54855-P and the European Community IRSES-project GEOMECH-246981. EPC has received partial support by the Asociaci\'on Mexicana de Cultura A.C.
LGN and EPC acknowledge to the Departamento de Matem\'aticas, Estad{\'\i}stica e IO,
at Universidad de La Laguna, for its hospitality in numerous visits. JCM acknowledges to IIMAS
at Universidad Nacional Aut\'onoma de M\'exico and the Departamento de Matem\'aticas at Universidad Aut\'onoma Metropolitana-Iztapalapa de M\'exico, for its hospitality in a visit
where this paper was started. MRO acknowledges the financial support of the EU Reintegration Grant PERG-GA-2010-27697.

\end{document}